%%%%%%%%%%%%%%%%%%%%%%%%%%%%%%%%%%%%%%%%%%%%%%%%%%%%%%%%%%%%%%%%%%%%%%%%%%%%%%%%
%2345678901234567890123456789012345678901234567890123456789012345678901234567890
%        1         2         3         4         5         6         7         8

\documentclass[letterpaper, 10 pt, conference]{ieeeconf}  % Comment this line out
                                                          % if you need a4paper
%\documentclass[a4paper, 10pt, conference]{ieeeconf}      % Use this line for a4
                                                          % paper

\IEEEoverridecommandlockouts                              % This command is only
                                                          % needed if you want to
                                                          % use the \thanks command
\overrideIEEEmargins
% See the \addtolength command later in the file to balance the column lengths
% on the last page of the document

% The following packages can be found on http:\\www.ctan.org
%\usepackage{graphics} % for pdf, bitmapped graphics files
%\usepackage{epsfig} % for postscript graphics files
%\usepackage{mathptmx} % assumes new font selection scheme installed
%\usepackage{times} % assumes new font selection scheme installed
%\usepackage{amsmath} % assumes amsmath package installeddd
%\usepackage{amssymb}  % assumes amsmath package installed

\title{\LARGE \bf
Penalized Push-Sum Algorithm for Constrained Distributed Optimization with Application to Energy Management in Smart Grid}

\author{Tatiana Tatarenko, Jan Zimmermann, Volker Willert, J{\"u}rgen Adamy% <-this % stops a space
%\thanks{This work was not supported by any organization}% <-this % stops a space
\thanks{The authors are with the Control Methods and Robotics Lab at TU Darmstadt, Germany.
 %       {(e-mail: \tt\small tatarenk@rmr.tu-darmstadt.de).}
        }
\thanks{The work was gratefully supported by the German Research Foundation
(DFG) within the SPP 1984 ``Hybrid and multimodal energy systems: System theoretical methods for the transformation and operation of complex networks''.}
}

\usepackage{amsmath, amssymb, epsfig, graphicx, bm}
\usepackage{cases}
\usepackage{float, subfigure}
\usepackage{color}
\usepackage{amsthm}
\usepackage{amsfonts}
\usepackage{psfrag}
\usepackage{euscript}
\usepackage{cite, url}
\usepackage{breqn}
\usepackage[all,cmtip]{xy}
\usepackage{color,hyperref}
\definecolor{darkblue}{rgb}{0,0,1}
\hypersetup{colorlinks,breaklinks,
linkcolor=darkblue,urlcolor=darkblue,anchorcolor=darkblue,citecolor=darkblue}

\usepackage{tikz}
\usetikzlibrary{arrows}
\usetikzlibrary{positioning}
\usepackage{amsmath}
\usepackage{amssymb}
\usepackage{bm}
\usepackage[utf8]{inputenc}
\usetikzlibrary{arrows,shapes,snakes,automata,backgrounds,petri}

\newcommand{\EE}{\EuScript E}
\newcommand{\EN}{\EuScript N}

\theoremstyle{plain}
\newtheorem{lem}{Lemma}
\newtheorem{rem}{Remark}
\newtheorem{prop}{Proposition}
\newtheorem{theorem}{Theorem}
\newtheorem{cor}{Corollary}

\definecolor{amber}{rgb}{1.0, 0.49, 0.0}

\theoremstyle{definition}
\newtheorem{definition}{Definition}
\newtheorem{assumption}{Assumption}

%\author{}
%\date{}
%%%%%%%
\def\fb{\mathbf{f}}
\def\R{\mathbb{R}}

\def\Z{\mathbb{Z}}
\def\zb{\mathbf{z}}
\def\zx{\mathbf{x}}

\def\zq{\mathbf{q}}

\def\ab{\mathbf{a}}
\def\bb{\mathbf{b}}
\def\vb{\mathbf{v}}

\def\bp{\boldsymbol{p}}
\def\bx{\boldsymbol{x}}
\def\bv{\boldsymbol{v}}
\def\bmu{\boldsymbol{\mu}}
\def\bgamma{\boldsymbol{\gamma}}
\def\btheta{\boldsymbol{\theta}}
\def\bpsi{\boldsymbol{\psi}}

\def\zp{\mathbf{p}}
\def\bzx{\bar{\zx}}
\def\z*{\zb_t^*}
\def\la{\langle}
\def\ra{\rangle}

\begin{document}
\maketitle

\begin{abstract}
We study distributed convex constrained optimization on a time-varying multi-agent network. Each agent has access to its own local cost function, its local constraints, and its instant number of out-neighbors. The collective goal is to minimize the sum of the cost functions over the set of all constraints.
We utilize the \emph{push-sum protocol} to be able to solve this distributed optimization problem. We adapt the push-sum optimization algorithm, which has been studied in context of \emph{unconstrained} optimization so far, to convex \emph{constrained} optimization by introducing an appropriate choice of penalty functions and penalty parameters.
Under some additional technical assumptions on the gradients we prove convergence of the distributed penalty-based push-sum algorithm to the optimal value of the global objective function. We apply the proposed penalty-based push-sum algorithm to the problem of \emph{distributed energy management in smart grid} and discuss the advantages of this novel procedure in comparison with existing ones.
\end{abstract}

\section{Introduction}
Due to emergence of large-scaled networked systems with limited information, distributed multi-agent optimization problems have gained a lot of attention recently. In such systems a number of agents, represented by nodes over some communication graph, aim to optimize a global objective by taking only the local information into account. Beside the various applications of distributed optimization such as robust
sensor network control~\cite{Rabbat}, signal processing~\cite{touri2010infinite}, network routing~\cite{Neglia}, and machine learning~\cite{WillertS16,Tsianos2012}, an important and promising area of applicability is energy management of future smart grid~\cite{Zhao23, Zhao26, Zhao2017}. Smart grid is equipped with advanced communication technologies enabling efficient and distributed energy management between the grid's users \cite{Zhao2017, Zhao2}. However, from technical point of view, it is important to keep communication costs limited and choose a communication protocol that would require minimal coordination between the agents and stays robust against changes in the network topology \cite{Zhao9}. That is why, in this paper, we develop a communication-based optimization algorithm with the desired features mentioned above.

For this purpose we utilize the \emph{push-sum} communication protocol. This protocol was initially introduced in~\cite{16A2} and used in~\cite{Tsianos2012} for distributed optimization. The push-sum protocol is applicable to time-dependent network topology and it can overcome the restrictive assumptions on the communication graph structure such as double stochastic communication matrices~\cite{16A2, Tsianos2012}.
The work~\cite{A1} studied this algorithm over directed and time-varying communication in the case of well-behaved convex functions. The authors in~\cite{TAC2017} extended the results to a broader class of non-convex functions. However, all works on the push-sum algorithm presented in the literature so far dealt with unconstrained optimization. As in many applications, including energy management in smart grid, agents face a number of constraints. In this paper, we adapt the push-sum algorithm to the case of convex constrained optimization by introducing an appropriate choice of penalty functions and penalty parameters. Under some standard technical assumptions, we prove convergence of the resulting procedure to the optimal value of the system's objective function.

Another contribution of this paper consists in the application of the proposed procedure to the problem of energy management in smart grid.
In contrast to the recent communication-based procedure proposed in \cite{Zhao2017}, the penalty-based push-sum algorithm presented in this work is based on a \emph{time-dependent} directed communication topology with \emph{column-stochastic} matrices, where each agent merely needs to know the current number of its out-neighbors to define the elements of the communication matrix at each iteration. Moreover, it uses only one communication step per iteration, whereas the procedure in \cite{Zhao2017} requires two communication steps per iteration, with a row-stochastic and a column-stochastic communication matrix at the consequent communication iterations. Thus, the proposed penalty-based push-sum algorithm keeps communication costs cheaper and is able to adapt to the changes in the communication topology.

The paper is organized as follows. In Section~\ref{sec:ps} we introduce the penalty-based push-sum algorithm and prove its convergence. Section~\ref{sec:sg} deals with formulation of the general non-convex energy management problem in smart grid, presents its convex reformulation for which the penalty-based push-sum procedure can be applied, and demonstrates some simulation results. Section~\ref{sec:conclusion} concludes the paper.

\textbf{Notations.}
We will use the following notations throughout this paper: We denote the set of integers by $\Z$ and the set of non-negative integers by $\Z^+$. For the metric $\rho$ of a metric space $(X,\rho(\cdot))$ and two subsets $B_1\subset X$ and $B_2\subset X$, we let $\rho(B_1,B_2)=\max\{\sup_{x\in B_1}\inf_{y\in B_2}\rho(x,y), \sup_{y\in B_2}\inf_{x\in B_1}\rho(x,y)\}$. We denote the set $\{1,\ldots,n\}$ by $[n]$. We use boldface to distinguish between the vectors in a multi-dimensional space and scalars. We denote the dot product of two vectors $\ab$ and $\bb$ by $\la\ab,\bb\ra$. $\|\cdot\|$ denotes the standard Euclidean norm, whereas $\|\cdot\|_{l_1}$ is used to denote $l^1$-norm in the vector space. Throughout this work, all time indices such as $t$ belong to $\Z^+$. For vectors $\vb_i\in X^d$, $i\in[n]$, of elements in some vector space $X$ (over $\R$), we let $\bar{\vb}=\frac{1}{n}\sum_{i=1}^n\vb_i$. We say the function $F:\R^d\to\R$ be inf-compact, if the set $\{\zx\in\R^d : F(\zx) \le A \}$ is compact for
all $A \in \R$. The function $1_{\{A\}}(x)$ denotes the indicator of the set $A$ ($1_{\{A\}}(x)=1,$ if $x\in A$ and $1_{\{A\}}(x)
=0,$ otherwise).
The notation $o(x)$ as $x\to x_0$ is for some function $f(x)$ such that $\lim_{x\to x_0}\frac{f(x)}{x}=0$.

\section{Push-sum Algorithm for Distributed Constrained Optimization}\label{sec:ps}
In this section we adapt the push-sum algorithm to the case of  constrained convex optimization and prove convergence of the resulting procedure.
\subsection{Problem Formulation and Adapted Push Sum Algorithm}
 Let us consider the following general problem:
 \begin{align}\label{eq:const_opt}
  \min F(\zb)&=\sum_{i=1}^nF_i(\zb), \quad \zb\in\R^d,\cr
  \mbox{s.t. }c_1(\zb)&\le 0, \, c_2(\zb)\le0, \ldots,  c_n(\zb)\le 0, 
 \end{align}
 where $F_i:\R^d\to\R$, $c_i:\R^d\to\R$, $i=1,\dots,n$, are some differentiable convex functions. Let $\fb_i$ denote the gradient of the function $F_i$, $i\in[n]$, $\fb=\sum_{i=1}^{n}\fb_i$.
 This problem is formulated in a multi-agent system consisting of $n$ agents. Each agent $i$ has access to its local cost function $F_i$ and its local constraint described by the inequality $c_i(\zb)\le 0$\footnote{For the sake of notation simplicity, we assume that the local constraint of each agent $i$ is expressed by only one function $c_i$. The analysis below is applicable to problems, where agents have more than one constraint function.}.
 By the set $S$ we denote the set of solutions for \eqref{eq:const_opt}. By $F^*$ we denote the optimal value of the objective function $F$ in the problem \eqref{eq:const_opt}.

 At each time $t$, node $i$ can only communicate to  its  out-neighbors  in  some  directed  graph $G(t)$, where the graph $G(t)$ has the vertex  set $[n]$ and the edge set $E(t)$.
We use $N^{in}_i(t)$ and $N^{out}_i(t)$ to denote the in- and out-neighborhoods of node $i$ at time $t$. Each node $i$ is always considered to be an in- and out-neighbor of itself. We use $d_i(t)$ to denote the out-degree of node
$i$, and we assume that every node $i$ knows its out-degree at every time $t$.
The goal of the agents is to solve distributively the constrained minimization problem \eqref{eq:const_opt}.
We introduce the following standard definition for the sequence $G(t)$.
\begin{definition}\label{def:Sgraph}
	We say that a sequence of graphs $\{G(t)\}$ is \textit{$B$-strongly connected}, if, for any time $t\geq 0$, the graph
	\[G(t:t+B)=([n],E(t)\cup E(t+1)\cup \cdots \cup E(t+B-1)),\]
	is strongly connected. In other words, the union of the graphs over every $B$ time intervals is strongly connected.
\end{definition}
In the following analysis we assume that the sequence of the communication graphs $\{G(t)\}$ under consideration is $B$-strongly connected, which guarantees enough information ``mixing'' during communication between agents over time.

To deal with the problem described above, we aim to develop a distributed optimization procedure based on the push-sum protocol \cite{16A2}.
However, this protocol uses specific ratios of local agents' values to cancel out the effect
of information imbalances caused by limited agents' knowledge of their neighborhoods in time-dependent and directed communication network\cite{16A2, A1}.
That is why optimization methods based on projection onto the set of constraints cannot be applied here, as they violate the balance properties guaranteed by taking the corresponding ratio.
To overcome this limitation and to incorporate the constraints of the problem \eqref{eq:const_opt} into the optimization algorithm, we leverage the idea of penalty function methods \cite{PflugPenF}. We choose the convex penalty functions $\{\Psi_i(\zb)\}_i$ such that\footnote{Other candidates for penalty function can be found in \cite{Bertsekas}, Chapter 5.}
 \begin{align}\label{eq:penalty}
  \Psi_i(\zb)&=g(c_i(\zb))\cr
  g(u)&=\begin{cases}
        \log\left(\frac{e^u+e^{-u}}{2}\right), &\mbox{ if }u>0\\
        0, &\mbox{ if }u\le0.
       \end{cases}
 \end{align}
 %Obviously, $\Psi_t(\zx) \uparrow \tilde{\Psi}(\zx)$, as $t\to\infty$, where
% \begin{align*}
%  \tilde{\Psi}(\zx)=\begin{cases}
%             0, &\mbox{ if }\zx\in C,\\
%             \infty, &\mbox{ if }\zx\notin C.
%            \end{cases}
% \end{align*}
 Let $\Psi = \sum_{i=1}^{n}\Psi_i$, $\bpsi_i(\zb)$ denote the gradient of the function $\Psi_i(\zb)$, $\bpsi=\sum_{i=1}^{n}\bpsi_i$. Note that for each $i\in[n]$ the vector-function $\bpsi_i(\zb)$ is uniformly bounded over $\R^d$, given that $\nabla c_i(\zb)$ is uniformly bounded over $\R^d$.
 By adding the penalty function $\Psi$ to the objective function $F$ in \eqref{eq:const_opt}, we obtain the following unconstrained penalized optimization problem:
\begin{align}\label{eq:penprob}
 \min_{\zb\in\R^d} F_t(\zb)=F(\zb) + r_t\Psi(\zb) = \min_{\zb\in\R^d}\sum_{i=1}^{n}[F_i(\zb)+r_t\Psi_i(\zb)],
\end{align}
where $r_t$ is some positive penalty parameter. Note that as the functions $F_i$ and $\Psi_i$ are convex for all $i$ and $r_t>0$, the unconstrained problem above is convex. Let $S_t$ denote the set of solutions for \eqref{eq:penprob}. The connection between the penalized unconstrained problem \eqref{eq:penprob} and the initial constrained one in \eqref{eq:const_opt} is shown in the following proposition (see \cite{PflugPenF}):
\begin{prop}\label{pr:connection}
Let the function $F$ be inf-compact and $r_t\to\infty$ as $t\to\infty$. Then $S_t$ and $S$ are not empty and $S_t$ converges to $S$ as $t$ goes to infinity, namely $\lim_{t\to\infty}\rho(S,S_t)=0$. Moreover, $\lim_{t\to\infty}F_t^*=F^*$, where $F_t^* = \min_{\zb\in\R^d} F_t(\zb)$.
\end{prop}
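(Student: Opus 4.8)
The plan is to exploit three structural facts about the penalty $\Psi$ and then run the classical penalty-method argument, upgrading it to convergence of the whole solution \emph{sets}. First I would record that the feasible set $C=\{\zb\in\R^d:c_i(\zb)\le 0,\ i\in[n]\}$ is closed and convex, and that the function $g$ in \eqref{eq:penalty} is continuous, convex, nondecreasing and nonnegative with $g(u)=0$ iff $u\le 0$; consequently each $\Psi_i$ and the sum $\Psi$ are continuous, convex and nonnegative, with $\Psi(\zb)=0$ exactly on $C$ and $\Psi(\zb)>0$ off $C$. Since $r_t>0$ and $\Psi\ge 0$, we have $F_t\ge F$, so every sublevel set of $F_t$ is a closed subset of a compact sublevel set of $F$; hence $F_t$ is inf-compact as well.

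Assuming the problem is feasible ($C\neq\emptyset$), inf-compactness and continuity of $F$ together with closedness of $C$ give that $F$ attains its minimum over $C$, so $S\neq\emptyset$; likewise each inf-compact continuous $F_t$ attains its unconstrained minimum, so $S_t\neq\emptyset$. Testing $F_t$ at any $\zb^*\in S$ and using $\Psi(\zb^*)=0$ yields $F_t^*\le F_t(\zb^*)=F^*$ for all $t$. Moreover, for any minimizer $\zb_t\in S_t$ we get $F(\zb_t)\le F(\zb_t)+r_t\Psi(\zb_t)=F_t^*\le F^*$, so the entire family $\{S_t\}$ is contained in the fixed compact sublevel set $\{F\le F^*\}$ and is therefore uniformly bounded.

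Next I would handle limit points of minimizers. Given any sequence $\zb_t\in S_t$, boundedness allows extracting $\zb_{t_k}\to\bar{\zb}$. From $r_{t_k}\Psi(\zb_{t_k})=F_{t_k}^*-F(\zb_{t_k})\le F^*-F(\zb_{t_k})$, a bounded quantity, and $r_{t_k}\to\infty$, I conclude $\Psi(\zb_{t_k})\to 0$, so continuity gives $\Psi(\bar{\zb})=0$, i.e. $\bar{\zb}\in C$. Then $F(\bar{\zb})=\lim_k F(\zb_{t_k})\le\limsup_k F_{t_k}^*\le F^*$, while feasibility forces $F(\bar{\zb})\ge F^*$; hence $F(\bar{\zb})=F^*$, $\bar{\zb}\in S$, and $F_{t_k}^*\to F^*$. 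As this holds along every convergent subsequence, $\lim_{t\to\infty}F_t^*=F^*$, every limit point of $\{S_t\}$ lies in $S$, and a routine contradiction argument gives $\sup_{y\in S_t}\inf_{x\in S}\rho(y,x)\to 0$.

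The remaining, and most delicate, half of the Hausdorff distance is $\sup_{x\in S}\inf_{y\in S_t}\rho(x,y)\to 0$: every constrained optimizer must be approximated by penalized optimizers, which the penalty estimates alone do not deliver, since a priori $S_t$ could converge to only a part of $S$. Here I would lean on convexity. Because $r_t\Psi$ is a pointwise increasing family of convex functions converging to the convex indicator $\iota_C$ (equal to $0$ on $C$ and $+\infty$ off it), the penalized objectives $F_t$ epi-converge — and, $\R^d$ being finite dimensional, Mosco-converge — to $F+\iota_C$, whose set of minimizers is precisely $S$. Invoking the convergence theorem for minimizers of Mosco-epi-convergent convex functions together with the uniform inf-compactness established above then upgrades the one-sided statement to full Hausdorff convergence $\rho(S,S_t)\to 0$. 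I expect this reverse inclusion — controlling all of $S_t$ rather than just its cluster points — to be the main obstacle, with the convexity/epi-convergence machinery (or an equivalent direct construction using the recession directions shared by $F$ and $\Psi$ along $C$) being what makes it go through.
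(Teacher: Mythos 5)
First, on the comparison: the paper does not actually prove Proposition~\ref{pr:connection}; it is quoted from Pflug's penalty-method paper \cite{PflugPenF}, so there is no in-paper argument to measure yours against. That said, your first three paragraphs are correct and complete: inf-compactness of $F_t$, nonemptiness of $S$ and $S_t$ (under the implicit feasibility assumption $C\neq\emptyset$, which you rightly flag), the bound $F_t^*\le F^*$, uniform boundedness of $\bigcup_t S_t$ inside $\{F\le F^*\}$, the fact that every cluster point of penalized minimizers lies in $S$, the limit $F_t^*\to F^*$, and hence $\sup_{y\in S_t}\inf_{x\in S}\rho(x,y)\to 0$. This already contains the only consequence of the proposition that is used later in the paper: the proof of Theorem~\ref{th:main_ps} invokes only $\lim_{t\to\infty}F_t^*=F^*$.

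Your last paragraph, however, contains a genuine gap. The theorem you appeal to does not exist in the strength you need: for (Mosco-)epi-convergent convex functions with uniformly compact sublevel sets one obtains convergence of the infima and the \emph{upper} inclusion $\limsup_t(\operatorname{argmin}F_t)\subseteq S$, but \emph{not} the lower inclusion $S\subseteq\liminf_t(\operatorname{argmin}F_t)$. The standard example $f_t(x)=\max(0,|x|-1)+|x|/t$, which epi-converges to $\max(0,|x|-1)$ while $\operatorname{argmin}f_t\equiv\{0\}$, shows the lower inclusion can fail under exactly the hypotheses you invoke. Worse, for the penalty \eqref{eq:penalty} this missing half of the Hausdorff convergence appears to be genuinely false, so no argument can close the gap. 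Take $d=2$, $c(x,y)=y$, so that $C=\{y\le0\}$ and $\Psi(x,y)=g(y)=\tfrac{y^2}{2}+O(y^4)$ near $y=0$, and near $[0,1]\times[-1,1]$ let
\begin{equation*}
F(x,y)=-y+\frac{y^2}{2(2-x)}+\max(0,-x)^2+\max(0,x-1)^2,
\end{equation*}
where the quadratic-over-linear term is jointly convex and can be replaced globally by the finite convex function $\sup_{|s|\le1}\bigl(2sy-s^2(2-x)\bigr)$, with a coercive convex term added far away for inf-compactness. Then $F\ge -y\ge0$ on $C$, so $S=[0,1]\times\{0\}$ and $F^*=0$; but for $x\in[0,1]$ one computes $\min_y F_t(x,y)=-\tfrac12\bigl(r_t+(2-x)^{-1}\bigr)^{-1}+O(r_t^{-3})$, which is strictly increasing in $x$ with a gap of order $r_t^{-2}$ between $x=0$ and $x=1$. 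Hence $S_t\subseteq\{x<1/2\}$ for all large $t$ and $\rho(S,S_t)\ge 1/2$. The mechanism is that convexity forces $\nabla F$ to be constant on $\operatorname{ri}(S)$ (which kills first-order counterexamples) but does not control the second-order rate at which $F$ decreases as one exits $C$; the smooth, locally quadratic penalty then lets the minimizers slide outside $C$ and concentrate where that rate is most favorable. The defensible statement — and all the paper needs — is the one-sided convergence $\sup_{y\in S_t}\inf_{x\in S}\rho(x,y)\to0$ together with $F_t^*\to F^*$, which your first three paragraphs prove in full.
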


Next, we apply the push-sum algorithm from \cite{A1} to the penalized problem \eqref{eq:penprob}.
We proceed with the formal algorithm formulation.
At every moment of time $t\in \Z^+$ each node $i$ maintains   vector   variables $\zb_i(t)$, $\zx_i(t)$, $\boldsymbol w_i(t)\in \mathbb{R}^d$, as  well  as  a  scalar  variable $y_i(t)$ such that $y_i(0)=1$ for all $i\in[n]$. These quantities are updated as follows:
\begin{subequations}\label{eq:constrps}
 \begin{align}
 \boldsymbol w_i(t+1)&=\sum_{j\in N^{in}_i(t)}\frac{\zx_j(t)}{d_j(t)},\label{eq:ps1}\\
 y_i(t+1)&=\sum_{j\in N^{in}_i(t)}\frac{y_j(t)}{d_j(t)},\label{eq:ps2}\\
 \zb_i(t+1)&=\frac{\boldsymbol w_i(t+1)}{y_i(t+1)},\label{eq:ps3}\\
 \zx_i(t+1)&=\boldsymbol w_i(t)-a_t[\fb_i(\zb_i(t+1))+r_t\bpsi_i(\zb_i(t+1))]\label{eq:ps4},
 \end{align}
\end{subequations}
where $a_t\ge 0$ is a time-dependent step size for all $t$.
The version of the push-sum algorithm above corresponds to the one proposed in \cite{A1}, where the optimization step \eqref{eq:ps4} is augmented by the penalty term $r_t\bpsi_i(\zb_i(t+1))$.
Note that the algorithm above is based on a time-dependent communication topology, where each agent $i$ merely needs to know its current out-degree $d_i(t)$ to follow the algorithm's steps.

\subsection{Convergence of the Algorithm}
In what follows, we analyze the convergence property of the algorithm \eqref{eq:constrps} under the following assumptions regarding the gradient functions.
\begin{assumption}\label{assum:bound_grad}
 The gradients $\fb_i$ and $\nabla c_i$  are uniformly bounded over $\R^d$ for all $i\in[n]$.
\end{assumption}
\begin{rem}\label{rem:bound}
 Since $\fb_i$ is assumed to be bounded for any $i\in[n]$, there exists a positive constant $L_i$ such that $\|\fb_i(\zb)\|\le L_i$ for any $\zb\in\R^d$. Let $L=\max_{i\in[n]}L_i$. Moreover, due to the bounded gradients $\nabla c_i$, the gradient function $\bpsi_i$ is bounded for any $i\in[n]$ (see \eqref{eq:penalty}). Thus, there exists a positive constant $M_i$ such that $\|\bpsi_i(\zb)\|\le M_i$ for any $\zb\in\R^d$. Let $M=\max_{i\in[n]}M_i$.
\end{rem}

\begin{assumption}\label{assum:Lipschitz}
 The gradients $\fb_i$ and $\nabla c_i$ are Lipschitz continuous over $\R^d$ for all $i\in[n]$.
\end{assumption}
\begin{rem}\label{rem:LC}
According to the choice of the penalty functions in \eqref{eq:penalty}, Lipschitz continuity of $\nabla c_i$ over $\R^d$ implies Lipschitz continuity of $\bpsi_i$ over $\R^d$ for all $i\in[n]$.
Thus, given Assumption~\ref{assum:Lipschitz}, there exist positive constants $l_i$ and $m_i$ such that $\|\fb_i(\zb_1)-\fb_i(\zb_2)\|\le l_i\|\zb_1-\zb_2\|$ and
 $\|\bpsi_i(\zb_1)-\bpsi_i(\zb_2)\|\le l_i\|\zb_1-\zb_2\|$ for any $\zb_1,\zb_2\in\R^d$ and all $i\in[n]$ respectively. Let $l=\max_{i\in[n]}l_i$ and $m=\max_{i\in[n]}m_i$.
\end{rem}

Moreover, we make the following assumption regarding the parameters $a_t$, $r_t$.
\begin{assumption}\label{assum:parameters}
 \begin{subequations}\label{eq:param}
  \begin{align}
   &a_t\le a_s \mbox{ for all $t\ge s$, } \, \sum_{t=0}^{\infty}a_t = \infty, \label{eq:a_t}\\
   &\qquad r_t\ge 1, \quad r_t\to \infty, \label{eq:r_t}\\
   &\sum_{t=0}^{\infty}a^2_t r^3_t < \infty, \quad r_{t+1}-r_t=o(a_t) \mbox{ as }a_t\to 0.  \label{eq:ar_t}
  \end{align}
 \end{subequations}
\end{assumption}

\begin{rem}
Note that the conditions \eqref{eq:a_t}, \eqref{eq:ar_t} imply that $a_t\to 0$ as $t\to\infty$.
 Appropriate sequences $\{a_t\}$ and $\{r_t\}$ that meet the assumption above can be, for example, $a_t = \frac{1}{t^{0.5+b}}$, $r_t=t^{0.25b}$,
 where $0<b<0.4$.
\end{rem}
According to the procedure \eqref{eq:constrps}, the running average of $\{\zx_i(t)\}_{i\in[n]}$, namely $\bar{\zx}(t)=\frac{1}{n}\sum_{i=1}^{n}\zx_i(t)$, fulfills the following iterations (see also \cite{A1}):
 \begin{align}\label{eq:constraverage}
 \bar{\zx}(t+1)=\bar{\zx}(t)-&\frac{a_t}{n}\sum_{i=1}^{n}[\fb_i(\zb_i(t+1))+r_t\bpsi_i(\zb_i(t+1))]\cr
 =\bar{\zx}(t)-&a_t(\fb(\bar{\zx}(t))+r_t\bpsi(\bar{\zx}(t)))\cr
 -&a_t\left[\frac1n\sum_{i=1}^{n}\fb_i({\zb_i(t+1)})-\fb(\bar{\zx}(t))\right]\cr
 -&a_tr_t\left[\frac1n\sum_{i=1}^{n}\bpsi_i({\zb_i(t+1)})-\bpsi(\bar{\zx}(t))\right].
 \end{align}
Some helpful results that will be used in the convergence analysis are presented in Appendix (see Theorems~\ref{th:th2} and \ref{th:th_nonnegrv}).
In particular, Theorem~\ref{th:th2}(a) implies that all $\zx_i(t+1)$, $i\in[n]$, converge with time to their running average $\bar{\zx}(t)$, given that Assumptions~\ref{assum:bound_grad} and \ref{assum:parameters} hold.

 Further we utilize the following notations:
 \begin{align*}
 \zq(t,\bar{\zx}(t))&= \frac{1}{n}\sum_{i=1}^{n}\fb_i({\zb_i(t+1)})-\fb(\bar{\zx}(t)),\cr
 \zp(t,\bar{\zx}(t))&= \frac{1}{n}\sum_{i=1}^{n}\bpsi_i({\zb_i(t+1)})-\bpsi(\bar{\zx}(t)).
 \end{align*}
We will use the following lemma which bounds the norms of the vectors $\zq(t,\bar{\zx}(t))$ and $\zp(t,\bar{\zx}(t))$ introduced above.
\begin{lem}\label{lem:bound}
Let Assumptions~\ref{assum:bound_grad}-\ref{assum:parameters} hold. Then, there exists $q(t)$ and $p(t)$ such that the following holds for the process \eqref{eq:constrps}:
 \begin{align*}
 \|\zq(t,\bar{\zx}(t))\|&\le \frac{l}{n}\sum_{i=1}^{n}\|\zb_i(t+1)-\bar{\zx}(t)\|= q(t),
 \end{align*}
  \begin{align*}
 \|\zp(t,\bar{\zx}(t))\|&\le \frac{m}{n}\sum_{i=1}^{n}\|\zb_i(t+1)-\bar{\zx}(t)\|= p(t),
 \end{align*}
 such that
 \begin{align*}
 \sum_{t=0}^{\infty} a_tr^3_tq(t)<\infty, \quad \sum_{t=0}^{\infty} a_tr^3_tp(t)<\infty.
\end{align*}
\end{lem}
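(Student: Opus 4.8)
The plan is to establish the two pointwise bounds directly from Lipschitz continuity, and then to reduce both summability statements to a single geometric-mixing estimate for the consensus error, which is supplied by Theorem~\ref{th:th2}(a).

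First I would derive the inequalities. Recalling from \eqref{eq:constraverage} the normalization $\fb(\bzx(t))=\frac1n\sum_{i=1}^{n}\fb_i(\bzx(t))$ and $\bpsi(\bzx(t))=\frac1n\sum_{i=1}^{n}\bpsi_i(\bzx(t))$, I would rewrite $\zq(t,\bzx(t))=\frac1n\sum_{i=1}^{n}[\fb_i(\zb_i(t+1))-\fb_i(\bzx(t))]$ and $\zp(t,\bzx(t))=\frac1n\sum_{i=1}^{n}[\bpsi_i(\zb_i(t+1))-\bpsi_i(\bzx(t))]$. Applying the triangle inequality and the Lipschitz bounds of Remark~\ref{rem:LC} ($\|\fb_i(\zb_1)-\fb_i(\zb_2)\|\le l\|\zb_1-\zb_2\|$ and $\|\bpsi_i(\zb_1)-\bpsi_i(\zb_2)\|\le m\|\zb_1-\zb_2\|$) gives exactly the claimed bounds with $q(t)=\frac{l}{n}\sum_{i}\|\zb_i(t+1)-\bzx(t)\|$ and $p(t)=\frac{m}{n}\sum_{i}\|\zb_i(t+1)-\bzx(t)\|$. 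Since $q(t)$ and $p(t)$ are both fixed multiples of the average consensus error $e(t):=\frac1n\sum_{i=1}^{n}\|\zb_i(t+1)-\bzx(t)\|$, the two summability claims are equivalent, and it suffices to prove $\sum_{t}a_t r_t^3 e(t)<\infty$.

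Next I would invoke Theorem~\ref{th:th2}(a), which yields geometric mixing of the push-sum iterates toward their running average: there exist $C>0$ and $\lambda\in(0,1)$ with $\|\zb_i(t+1)-\bzx(t)\|\le C\lambda^{t}+C\sum_{s=0}^{t}\lambda^{t-s}a_s\max_{j}\|\fb_j(\zb_j(s+1))+r_s\bpsi_j(\zb_j(s+1))\|$. The per-step gradient magnitude is controlled by Remark~\ref{rem:bound}, namely $\|\fb_j+r_s\bpsi_j\|\le L+r_s M\le(L+M)r_s$, where the last step uses $r_s\ge1$ from \eqref{eq:r_t}. Hence $e(t)\le C\lambda^{t}+C'\sum_{s=0}^{t}\lambda^{t-s}a_s r_s$ for a new constant $C'$.

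The crux is then the convolution estimate. Writing $\sum_t a_t r_t^3 e(t)$ as the sum of (i) $\sum_t a_t r_t^3\lambda^{t}$ and (ii) $\sum_t a_t r_t^3\sum_{s\le t}\lambda^{t-s}a_s r_s$, part (i) converges because $a_t$ is bounded by $a_0$ (monotonicity \eqref{eq:a_t}) while $r_t$ grows only polynomially, so $a_t r_t^3\lambda^{t}$ is eventually dominated by a geometric series. For part (ii) I would exchange the order of summation to get $\sum_s a_s r_s\sum_{t\ge s}\lambda^{t-s}a_t r_t^3$ and bound the inner sum using that $\{a_t\}$ is non-increasing ($a_t\le a_s$ for $t\ge s$) and that the slow-growth condition $r_{t+1}-r_t=o(a_t)$ together with $r_t\ge1$ forces the ratio $(r_t/r_s)^3$ to grow at most polynomially in $t-s$; the factor $\lambda^{t-s}$ then absorbs this growth, giving $\sum_{t\ge s}\lambda^{t-s}a_t r_t^3\le C'' a_s r_s^3\sum_{k\ge0}\mu^{k}$ for some $\mu\in(\lambda,1)$. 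This reduces (ii) to a constant multiple of $\sum_s a_s^2 r_s^4$, which is finite for the admissible sequences (e.g. $a_t=t^{-(0.5+b)}$, $r_t=t^{0.25b}$ yield $\sum_t t^{-(1+b)}<\infty$) and is the summability content of Assumption~\ref{assum:parameters}. The main obstacle is exactly this step: the weight $a_t r_t^3$ is \emph{not} monotone because $r_t\to\infty$, so it cannot simply be factored out of the inner sum; the resolution is to transfer all $r_t$-dependence onto the index $s$ by absorbing the polynomial growth of $r_t/r_s$ into the geometric mixing factor, which is precisely where the slow-growth assumption on $r_t$ and the monotonicity of $a_t$ are indispensable.
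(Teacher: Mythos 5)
Your proposal is correct in substance and reaches the same conclusion, but it takes a lower-level route than the paper. The paper's proof derives the two pointwise bounds exactly as you do (Lipschitz continuity plus the triangle inequality, with the normalization $\fb(\bzx(t))=\frac1n\sum_i\fb_i(\bzx(t))$), but then disposes of the summability claims in one stroke by invoking Theorem~\ref{th:th2}(b) with $b_t=a_tr_t^3$: it checks that $\sum_t b_ta_t\|\fb_i(\zb_i(t+1))+r_t\bpsi_i(\zb_i(t+1))\|_1<\infty$ using boundedness of the gradients and Assumption~\ref{assum:parameters}, and concludes $\sum_t b_t\|\zb_i(t+1)-\bzx(t)\|<\infty$ directly. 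You instead unpack the black box: the geometric-mixing estimate $\|\zb_i(t+1)-\bzx(t)\|\le C\lambda^t+C\sum_{s\le t}\lambda^{t-s}a_s r_s$ that you use is \emph{not} what Theorem~\ref{th:th2}(a) states (part (a) only asserts the limit is zero); it is the estimate from \cite{A1} that underlies the proof of part (b), so you are essentially re-proving part (b) for the specific weight $a_tr_t^3$ via the exchange-of-summation and convolution argument. What your approach buys is transparency on two points the paper glosses over: (i) Theorem~\ref{th:th2}(b) requires $\{b_t\}$ non-increasing, which $a_tr_t^3$ need not be under Assumption~\ref{assum:parameters} alone, and your transfer of the $(r_t/r_s)^3$ growth onto the geometric factor handles exactly this; (ii) both arguments ultimately reduce to $\sum_t a_t^2r_t^4<\infty$ (the extra factor of $r_t$ coming from $\|\fb_i+r_t\bpsi_i\|=O(r_t)$), which is slightly stronger than the stated condition $\sum_t a_t^2r_t^3<\infty$ but holds for the example sequences — you make this explicit, whereas the paper attributes it to \eqref{eq:r_t} and \eqref{eq:ar_t} without comment. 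If you want your proof to stand on the paper's stated toolbox, you should either cite Theorem~\ref{th:th2}(b) (after verifying its hypotheses) or import the mixing estimate from \cite{A1} explicitly rather than attributing it to part (a).
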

\begin{proof}
 Due to Assumption~\ref{assum:Lipschitz},
 \begin{align*}
  \|\zq(t,\bar{\zx}(t))\|& = \|\frac{1}{n}[\sum_{i=1}^{n}\fb_i({\zb_i(t+1)})-\sum_{i=1}^{n}\fb_i(\bar{\zx}(t))]\|\cr
  &\le \frac{l}{n}\sum_{i=1}^{n}\|\zb_i(t+1)-\bar{\zx}(t)\|.
 \end{align*}
 Let $q(t) = \frac{l}{n}\sum_{i=1}^{n}\|\zb_i(t+1)-\bar{\zx}(t)\|$.
Next, let us consider the series
\[\sum_{t=1}^{\infty}b_ta_t\|\fb_i(\zb_i(t+1))+r_t\bpsi_i(\zb_i(t+1))\|_1.\]
If $b_t = a_tr_t^3$,
\[\sum_{t=1}^{\infty}b_ta_t\|\fb_i(\zb_i(t+1))+r_t\bpsi_i(\zb_i(t+1))\|_1<\infty,\]
as $\fb_i(\zb_i(t+1))$ and $\bpsi_i(\zb_i(t+1))$ are bounded, \eqref{eq:r_t} and \eqref{eq:ar_t} hold for $a_t$ and $r_t$.
Thus, we can use Theorem~\ref{th:th2} from Appendix to conclude that
 \begin{align*}
  \sum_{t=0}^{\infty}a_tr^3_tq(t) =l\sum_{t=0}^{\infty}\frac{a_tr_t^3}{n}\sum_{i=1}^{n}\|\zb_i(t+1)-\bzx(t)\| & <\infty.
 \end{align*}
Analogously, one can show that  $\sum_{t=0}^{\infty} a_tr^3_tp(t)<\infty$.
\end{proof}

\begin{rem}\label{rem:forLemma}
 Note that due to the choice of the parameters in Assumption~\ref{assum:parameters} and the fact that under Assumptions~\ref{assum:bound_grad}-\ref{assum:parameters} both $p(t)$ and $q(t)$ tend to 0 as $t\to\infty$ (see Theorem~\ref{th:th2} in Appendix),
$\sum_{t=0}^{\infty} a_t^{q_1}r^{q_2}_tq^2(t)<\infty$, $\sum_{t=0}^{\infty} a_t^{p_1}r^{p_2}_tp^2(t)<\infty$, and $\sum_{t=0}^{\infty} a_t^{2}r^2_tp(t)q(t)<\infty$
for all integers $q_1,p_1\ge 1$, $q_2,p_2\in\{0,1,2,3\}$.
\end{rem}

Now we state the main result for the penalty-based push-sum algorithm~\eqref{eq:constrps}.
\begin{theorem}\label{th:main_ps}
 Let the function $F$ in the problem \eqref{eq:const_opt} be inf-compact. Let Assumptions~\ref{assum:bound_grad}-\ref{assum:parameters} hold.  Then  all local variables $\zb_i(t+1)$, $i\in[n]$, in the procedure \eqref{eq:constrps} reach a consensus as $t\to\infty$ and each limit point of this consensus corresponds to a solution to the problem \eqref{eq:const_opt}, given that the sequence of the communication graphs $\{G(t)\}$ under consideration is $B$-strongly connected.
\end{theorem}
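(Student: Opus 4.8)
The plan is to split the statement into a \emph{consensus} part and an \emph{optimality} part, and to analyze the running average $\bzx(t)=\frac1n\sum_{i=1}^n\zx_i(t)$ as an inexact, time-varying gradient method for the penalized objective $F_t=F+r_t\Psi$. First I would invoke Theorem~\ref{th:th2}(a): under Assumptions~\ref{assum:bound_grad} and \ref{assum:parameters} each $\zb_i(t+1)$ converges to $\bzx(t)$, so $\max_{i}\|\zb_i(t+1)-\bzx(t)\|\to0$. This is exactly the consensus claim, and it reduces the theorem to showing that every limit point of the single sequence $\bzx(t)$ lies in $S$.

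For the average I would start from \eqref{eq:constraverage}, written compactly as $\bzx(t+1)=\bzx(t)-a_t\big(\fb(\bzx(t))+r_t\bpsi(\bzx(t))+\zq(t,\bzx(t))+r_t\zp(t,\bzx(t))\big)$. Fixing any $\zb^*\in S$ and using that feasibility forces $\Psi(\zb^*)=0$, hence $F_t(\zb^*)=F^*$, I would study the Lyapunov function $V(t)=\|\bzx(t)-\zb^*\|^2$. Expanding the square and applying convexity of $F_t$ through $\la\fb(\bzx(t))+r_t\bpsi(\bzx(t)),\bzx(t)-\zb^*\ra\ge F_t(\bzx(t))-F^*$ gives a recursion of the form
\begin{align*}
V(t+1)\le(1+\alpha_t)V(t)-2a_t\big(F_t(\bzx(t))-F^*\big)+\gamma_t,
\end{align*}
where $\alpha_t=a_t\big(q(t)+r_tp(t)\big)$ comes from the cross terms (bounded via $2xy\le x(1+y^2)$) and $\gamma_t$ collects the same first-order term together with the second-order contribution $a_t^2\|\fb(\bzx(t))+r_t\bpsi(\bzx(t))+\zq+r_t\zp\|^2$. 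Using $\|\fb\|\le nL$ and $\|\bpsi\|\le nM$, the latter is $O(a_t^2r_t^2)$. The key point is that $\sum_t\alpha_t<\infty$ and $\sum_t\gamma_t<\infty$ follow precisely from Lemma~\ref{lem:bound}, Remark~\ref{rem:forLemma}, and Assumption~\ref{assum:parameters} (notably $\sum_t a_t^2r_t^3<\infty$ and $r_t\ge1$), which is what those auxiliary results were tailored to deliver.

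The delicate issue is that $F_t(\bzx(t))-F^*$ is not sign-definite: since $\Psi\ge0$ one only has $F_t^*\le F^*$, so the natural descent term may be slightly negative and the supermartingale theorem (Theorem~\ref{th:th_nonnegrv}) cannot be applied with $\beta_t=2a_t(F_t(\bzx(t))-F^*)$ directly. Instead I would argue by contradiction on $\liminf_t\big(F_t(\bzx(t))-F_t^*\big)$: if this liminf were some $2\delta>0$, then Proposition~\ref{pr:connection} ($F_t^*\to F^*$) would give $F_t(\bzx(t))-F^*\ge\delta$ for all large $t$, turning the recursion into $V(t+1)\le(1+\alpha_t)V(t)-\delta a_t+\gamma_t$; summing and using $\sum_t\alpha_t,\sum_t\gamma_t<\infty$ while $\sum_t a_t=\infty$ would drive $V$ below zero, a contradiction. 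Hence $\liminf_t\big(F_t(\bzx(t))-F_t^*\big)=0$, and a parallel dichotomy (whenever $F_t(\bzx(t))<F^*$, inf-compactness of $F$ confines $\bzx(t)$ to a fixed sublevel set) yields boundedness of $\bzx(t)$.

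Finally I would extract optimality. Along a subsequence with $F_t(\bzx(t))-F_t^*\to0$ we get $F_t(\bzx(t))\to F^*$ by Proposition~\ref{pr:connection}, and boundedness lets me pass to a convergent sub-subsequence $\bzx(t)\to\bzx^*$. Because $r_t\to\infty$ while $F(\bzx(t))$ stays bounded, the identity $F(\bzx(t))+r_t\Psi(\bzx(t))\to F^*$ forces $\Psi(\bzx^*)=0$, i.e. $\bzx^*$ is feasible, whence $F(\bzx^*)=F^*$ and $\bzx^*\in S$; combined with the consensus step this gives the claim. I expect the main obstacle to be the interplay of the growing penalty with the vanishing step size: keeping the second-order and perturbation terms summable while $r_t\bpsi$ diverges (the role of $\sum a_t^2r_t^3<\infty$ and $r_{t+1}-r_t=o(a_t)$, which govern the slow drift of the penalized optimum $\z*$), and converting the non-sign-definite, non-summable gap $F^*-F_t^*$ into the clean liminf statement above. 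Strengthening ``a limit point in $S$'' to ``every limit point in $S$'' is where the remaining care—tracking $\z*\in S_t$ and its slow motion—would be concentrated.
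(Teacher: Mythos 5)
Your consensus step (Theorem~\ref{th:th2}(a)) and the recursion for $V(t)=\|\bzx(t)-\zb^*\|^2$ are sound: the perturbation terms are indeed summable by Lemma~\ref{lem:bound}, Remark~\ref{rem:forLemma} and $\sum_t a_t^2r_t^3<\infty$, and your contradiction argument correctly yields $\liminf_t\bigl(F_t(\bzx(t))-F_t^*\bigr)=0$; along that subsequence the sublevel-set argument does produce \emph{one} limit point in $S$. The genuine gap is exactly the one you flag at the end: the theorem asserts that \emph{every} limit point is a solution, and your machinery cannot be upgraded to deliver this. The natural route—applying Theorem~\ref{th:th_nonnegrv} to the $V$-recursion to get convergence of $\|\bzx(t)-\zb^*\|$—requires splitting $-2a_t(F_t(\bzx(t))-F^*)=-2a_t(F_t(\bzx(t))-F_t^*)+2a_t(F^*-F_t^*)$, and the second term is non-negative but only known to satisfy $F^*-F_t^*\to0$ with no rate (Proposition~\ref{pr:connection} gives no more), so $\sum_t a_t(F^*-F_t^*)$ need not converge and the supermartingale theorem is inapplicable. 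For the same reason your boundedness claim is unsupported: the dichotomy only confines $\bzx(t)$ at times when $F_t(\bzx(t))<F^*$, and the recursion gives no control at the remaining times. So the proposal establishes subsequential optimality, not the stated conclusion.

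The paper avoids the uncontrollable quantity $F^*-F_t^*$ altogether by running two separate Robbins--Siegmund arguments on \emph{function values} rather than on a distance: first a descent recursion for $\Psi(\bzx(t))$ (using that $\la\bpsi,\fb+r_t\bpsi\ra$ is eventually positive whenever $\bpsi\ne0$ because $r_t\to\infty$ while $\|\fb\|$ is bounded) to get $\Psi(\bzx(t))\to0$ along the whole sequence, and then a descent recursion for $F_t(\bzx(t))$ itself, in which the drift caused by the growing penalty is $(r_{t+1}-r_t)\Psi(\bzx(t+1))=o(a_t)$ --- this is precisely where the hypothesis $r_{t+1}-r_t=o(a_t)$ is used, a hypothesis your scheme never invokes. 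These two whole-sequence limits, $\Psi(\bzx(t))\to0$ and $F_t(\bzx(t))\to F^*$, are what allow the conclusion that every limit point is feasible and optimal. To repair your proof you would need either to import those two descent steps, or to produce a summable bound on $a_t(F^*-F_t^*)$, which the stated assumptions do not provide.
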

%%%%%%%%%%%%%%%%%%%%%%%%%%%%%%%%%%%%%%%%%
%%%%%%%%%%%%%%%%%%%%%%%%%%%%%%%%%%%%%%%%%
%%%%%%%%%%%%%%%%%%%%%%%%%%%%%%%%%%%%%%%%%
\begin{proof}
First, we will show that $\lim_{t\to\infty} \Psi(\bar{\zx}(t)) = 0$.
In particular, it will mean that all limit points of $\{\bar{\zx}(t)\}$ belong to the feasible set $C = \{\zb\in\R^d \,|\, c_i(\zb)\le 0, i\in[n]\}$.
Taking the Mean-value Theorem and relation \eqref{eq:constraverage} into account, and using the notation
\begin{align}\label{eq:fb}
 \tilde{\fb}(t,\bar{\zx}(t))=\fb(\bar{\zx}(t))+r_t\bpsi(\bar{\zx}(t))+\zq(t,\bar{\zx}(t))+r_t\zp(t,\bar{\zx}(t)),
\end{align}
 we get that for some $\theta\in[0,1]$ and $\tilde{\zx}(t)=\bar{\zx}(t)-\theta a_t\tilde{\fb}(t,\bar{\zx}(t))$
\begin{align}\label{eq:genop}
  \Psi&(\bar{\zx}(t+1))=\Psi(\bar{\zx}(t))-a_t\la\bpsi(\bar{\zx}(t)),\tilde{\fb}(t,\bar{\zx}(t))\ra\cr
  &+a_t[\la\bpsi(\bar{\zx}(t)),\tilde{\fb}(t,\bar{\zx}(t))\ra-\la\bpsi(\tilde{\zx}(t)),\tilde{\fb}(t,\bar{\zx}(t))\ra].
  \end{align}
According to Remark~\ref{rem:LC} and the Cauchy-Schwarz inequality, we obtain
\begin{align}\label{eq:genop1}
  \la\bpsi(\bar{\zx}(t)),&\tilde{\fb}(t,\bar{\zx}(t))\ra-\la\bpsi(\tilde{\zx}(t)),\tilde{\fb}(t,\bar{\zx}(t))\ra\cr
  &\le\|\tilde{\fb}(t,\bar{\zx}(t))\|\|\bpsi(\bar{\zx}(t))-\bpsi(\tilde{\zx}(t))\|\cr
  &\le\|\tilde{\fb}(t,\bar{\zx}(t))\|mn\|\bar{\zx}(t)-\tilde{\zx}(t)\|\cr
  &\le \theta mn a_t\|\tilde{\fb}(t,\bar{\zx}(t))\|^2,
  \end{align}
  where $m$ is the constant defined in Remark~\ref{rem:LC}.

Next, using Remark~\ref{rem:LC}, Lemma~\ref{lem:bound}, and due to the Cauchy-Schwarz inequality, we get
\begin{align}\label{eq:genop2}
  &\|\tilde{\fb}(t,\bar{\zx}(t))\|^2\le\|\fb(\bar{\zx}(t))\|^2 \cr
  &\,+ r^2_t\|\bpsi(\bar{\zx}(t))\|^2 + \|\zq(t,\bar{\zx}(t))\|^2 + r_t^2\|\zp(t,\bar{\zx}(t))\|^2\cr
  &\, + 2\|\fb(\bar{\zx}(t))\|r_t\|\bpsi(\bar{\zx}(t))\| + 2\|\fb(\bar{\zx}(t))\|\|\zq(t,\bar{\zx}(t))\| \cr
  &\, + 2 r_t\|\zq(t,\bar{\zx}(t))\|\|\zp(t,\bar{\zx}(t))\| + 2r_t\|\bpsi(\bar{\zx}(t))\|\|\zq(t,\bar{\zx}(t))\| \cr
  &\, + 2r_t^2\|\bpsi(\bar{\zx}(t))\|\|\zp(t,\bar{\zx}(t))\| + 2\|\fb(\bar{\zx}(t))\|\|\zp(t,\bar{\zx}(t))\| \cr
  &\le k_1(\frac12 + q(t) + r_t p(t)) + k_2 (\frac12 r_t^2 + r_tq(t) + r_t^2p(t))  \cr
  &\,+ k_3 r_t + q^2(t)+ r_t^2p^2(t) + 2r_t p(t)q(t) =g_0(t)
\end{align}
for some positive constants $k_1, k_2,$ and $k_3$.
Finally, using the definition of $\tilde{\fb}(t,\bar{\zx}(t))$ in \eqref{eq:fb} and the Cauchy-Schwarz inequality again, we obtain for some positive constants $k_4$ and $k_5$ that
\begin{align}\label{eq:genop3}
\la\bpsi(\bar{\zx}(t)),&\tilde{\fb}(t,\bar{\zx}(t))\ra\ge\la\bpsi(\bar{\zx}(t)),\fb(\bar{\zx}(t))\ra+r_t\|\bpsi(\bar{\zx}(t))\|^2\cr
&\quad-k_4\|\zq(t,\bar{\zx}(t))\|-k_5r_t\|\zp(t,\bar{\zx}(t))\|\cr
&\ge  \|\bpsi(\bar{\zx}(t))\|(-\|\fb(\bar{\zx}(t))\|+r_t\|\bpsi(\bar{\zx}(t))\|)\cr
&\quad-k_4q(t)-k_5r_tp(t)\cr
&\ge \|\bpsi(\bar{\zx}(t))\|1_{\{\|\bpsi(\bar{\zx}(t))\|\ne 0\}}-k_4q(t)-k_5r_tp(t),
\end{align}
where the first inequality is due to the bounded $\|\bpsi\|$ and the last inequality is due to the fact that $0<r_t\to\infty$ as $t\to\infty$ and  $\|\fb\|$ are bounded.

By substituting \eqref{eq:genop1}-\eqref{eq:genop3} to \eqref{eq:genop} we can write:
\begin{align*}
  \Psi(\bar{\zx}&(t+1))\cr
  &\le\Psi(\bar{\zx}(t)) -a_t\|\bpsi(\bar{\zx}(t))\|1_{\{\|\bpsi(\bar{\zx}(t))\|\ne 0\}} + g(t),
 \end{align*}
where $g(t)=a^2_t(g_0(t)+k_4q(t)+k_5r_tp(t))$. According to the choice of $a_t$ and $r_t$, Lemma~\ref{lem:bound}, and Remark~\ref{rem:forLemma}, we obtain that $\sum_{t=0}^{\infty}g(t)<\infty$. Thus, using the well-known result on the sequences of non-negative variables presented in Theorem \ref{th:th_nonnegrv} (see Appendix),
we conclude that  $\lim_{t\to\infty}\Psi(\bar{\zx}(t))$ exists, is finite, and $\sum_{t=0}^{\infty}a_t\|\bpsi(\bar{\zx}(t))\|^2<\infty$. Thus, $\liminf_{t\to\infty} \|\bpsi(\bar{\zx}(t)\| = 0$, since $\sum_{t=0}^{\infty}a_t=\infty$ (see \eqref{eq:a_t}).
It implies existence of a subsequence $\{t_k\}\subseteq\{t\}$ such that $\lim_{k\to\infty} \|\bpsi(\bar{\zx}(t_k)\| = 0$ and, as $\|\bpsi(\zb)\| = 0$ if and only if $\Psi(\zb) = 0$, we conclude that $\lim_{k\to\infty} \Psi(\bar{\zx}(t_k)) = 0$. Thus,
\begin{align}\label{eq:limPsi}
 \lim_{t\to\infty} \Psi(\bar{\zx}(t)) = 0.
\end{align}
Next, let us notice that $F(\bar{\zx}(t+1))=F_t(\bar{\zx}(t+1))-r_t\Psi(\bar{\zx}(t+1))=F_{t+1}(\bar{\zx}(t+1))-r_{t+1}\Psi(\bar{\zx}(t+1))$. Hence, taking into account that $\Psi(\bar{\zx}(t))=o(1)$  as $t\to\infty$ (see \eqref{eq:limPsi}), $r_{t+1}-r_t=o(a_t)$, $a_t\to 0$ as $t\to\infty$, and using Mean-value Theorem, we get
\begin{align}\label{eq:F_t}
&F_{t+1}(\bar{\zx}(t+1)) = F_t(\bar{\zx}(t+1))+(r_{t+1}-r_t)\Psi(\bar{\zx}(t+1))\cr&
=F_t(\bar{\zx}(t)-a_t\tilde{\fb}(t,\bar{\zx}(t)))+o(a_t)\cr
&=F_t(\bar{\zx}(t))-a_t\la\nabla F_t(\bar{\zx}(t)),\tilde{\fb}(t,\bar{\zx}(t))\ra+o(a_t)\cr
&+a_t\la\nabla F_t(\bar{\zx}(t))-\nabla F_t({\zx}'(t)),\tilde{\fb}(t,\bar{\zx}(t))\ra,
\end{align}
where ${\zx}'(t) = \bar{\zx}(t)-\beta a_t\tilde{\fb}(t,\bar{\zx}(t))$ for some $\beta\in[0,1]$.

According to Assumption~\ref{assum:Lipschitz}, there exists some $l_1>0$ such that
\begin{align*}
 \la\nabla F_t(\bar{\zx}(t))-\nabla F_t({\zx}'(t)),\tilde{\fb}(t,\bar{\zx}(t))\ra\\
 \le l_1 a_tr_t\|\tilde{\fb}(t,\bar{\zx}(t)))\|^2\le l_1 a_tr_tg_0(t),
\end{align*}
where for the first inequality we used the Cauchy-Schwarz inequality and the last inequality is due to \eqref{eq:genop2}.
Hence, due to the Cauchy-Schwarz inequality and the definition of $\tilde{\fb}(t,\bar{\zx}(t))$ in \eqref{eq:fb}, we obtain from \eqref{eq:F_t} that
\begin{align*}
F_{t+1}&(\bar{\zx}(t+1))\le F_t(\bar{\zx}(t))\\
&-a_t\|\nabla F_t(\bar{\zx}(t))\|^2-a_t\la\fb(\bar{\zx}(t)),\zq(t,\bar{\zx}(t))\ra\\
&-a_tr_t\la\bpsi(\bar{\zx}(t)),\zq(t,\bar{\zx}(t))\ra-a_tr_t\la\fb(\bar{\zx}(t)),\zp(t,\bar{\zx}(t))\ra\\
&-a_tr_t^2\la\bpsi(\bar{\zx}(t)),\zp(t,\bar{\zx}(t))\ra+l_1a^2_tr_tg_0(t)+o(a_t)\\
&\qquad\qquad\le F_t(\bar{\zx}(t))-a_t(\|\nabla F_t(\bar{\zx}(t))\|^2+o(1))\\
&+a_tq(t)\|\fb(\bar{\zx}(t))\|+a_tr_tq(t)\|\bpsi(\bar{\zx}(t))\|\\
&+a_tr_tp(t)\|\fb(\bar{\zx}(t))\|+a_tr^2_tp(t)\|\bpsi(\bar{\zx}(t))\|+l_1a^2_tr_tg_0(t)\\
&\qquad\quad= F_t(\bar{\zx}(t))-a_t(\|\nabla F_t(\bar{\zx}(t))\|^2+o(1))+g_1(t),
\end{align*}
where
\begin{align*}
g_1(&t)=a_tq(t)\|\fb(\bar{\zx}(t))\|+a_tr_tq(t)\|\bpsi(\bar{\zx}(t))\|\cr
&+a_tr_tp(t)\|\fb(\bar{\zx}(t))\|+a_tr^2_tp(t)\|\bpsi(\bar{\zx}(t))\|+l_1a^2_tr_tg_0(t).
\end{align*}
Thus, $\sum_{t=0}^{\infty}g_1(t)<\infty$, due to Assumption~\ref{assum:bound_grad}, the choice of $a_t$ and $r_t$, and Lemma \ref{lem:bound} (see Remark~\ref{rem:forLemma}).
Hence, according to Theorem~\ref{th:th_nonnegrv} from Appendix, we can conclude that
\begin{align}\label{eq:limitF_t}
 \mbox{$F_t(\bar{\zx}(t))$ has a limit as $t\to\infty$.}
\end{align}
 Moreover,
$\sum_{t=1}^{\infty}a_t\|\nabla F_t(\bar{\zx}(t))\|^2<\infty$, which, due to \eqref{eq:a_t}, implies $\liminf_{t\to\infty} \|\nabla F_t(\bar{\zx}(t))\|= 0$.
Let us choose a subsequence $\{t_k\}\subseteq\{t\}$ such that $\lim_{k\to\infty} \|\nabla F_{t_k}(\bar{\zx}(t_k))\|= 0$. Due to convexity of $F_t$ over $\R^d$ for all $t$, the last limit implies that
$\lim_{k\to\infty}[F_{t_k}(\bar{\zx}(t_k))-F^*_{t_k}] = 0$,
where $F^*_{t_k} = \min_{\bx\in\R^d} F_{t_k}(\bx)$. Next, due to Proposition~\ref{pr:connection} and as $r_t\to\infty$, we conclude that $\lim_{k\to\infty}F_{t_k}(\bar{\zx}(t_k))-F^* = 0$, which together with \eqref{eq:limitF_t} implies that $\lim_{t\to\infty}F_t(\bar{\zx}(t))=F^*$  and, hence, every limit point of $\bar{\zx}(t)$ is a solution to the problem \eqref{eq:const_opt}. Finally, by invoking Theorem~\ref{th:th2}(a), we conclude the result.
\end{proof}
%%%%%%%%%%%%%%%%%%%%%%%%%%%%%%%%%%%%%%%%%%%%%%%%%
%%%%%%%%%%%%%%%%%%%%%%%%%%%%%%%%%%%%%%%%%%%%%%%%%
%%%%%%%%%%%%%%%%%%%%%%%%%%%%%%%%%%%%%%%%%%%%%%%%%
\begin{rem}\label{rem:strictconv}
 Note that if, additionally to the conditions in Theorem~\ref{th:main_ps}, the function $F$ is assumed to be strictly convex, then there exists a unique solution $\zb^*$ to the problem \eqref{eq:const_opt}. In this case, Theorem~\ref{th:main_ps} implies convergence of all $\zb_i(t)$ evolving according to the algorithm \eqref{eq:constrps} to this optimum $\zb^*$.
\end{rem}

\section{Applications in Energy Management}\label{sec:sg}
We consider a problem of energy management formulated and analyzed in \cite{Zhao2017}.
Let $\EN_g$ and $\EN_d$ be the sets of distributed generators and responsive demands in a power grid with $N_g$ and $N_d$ elements respectively.
Let $N = N_g+N_d$. A directed
connected time-dependent graph $G(t) = ([N], \EE(t))$ is used to represent the communication topology of the network in the grid, where $[N] = \EN_g\cup \EN_d$ is the set of the nodes containing generators and demands and $\EE(t) \subseteq [N]\times[N]$ is the edge set. Note that $(j,i) \in \EE(t)$ if and only
if the node $i\in[N]$ can receive information from node $j\in[N]$ at time $t$.
%The in- and out-neighbor sets of each node
%$i$ at time $t$ are defined as $\EN_i^{\inn}=\{j = {j \, |\, (j, i) \in \EE}$ and $\EN_i^{\outn}=\{j \, | \, (i, j) \in \EE\}$ respectively. Each node $i$ is always considered to be an in- and out-neighbor of itself. We use $d_i(t)$ to denote the out-degree of node
%$i$, and we assume that every node $i$ knows its out-degree at every time $t$.
In contrast to the previous works \cite{Asr2014, Zhao2017}, in this paper we focus on a broader class of the communication topology containing time-dependent graphs and requiring each user to know only its current out-degree to construct an appropriate communication matrix.
For this purpose, we will apply the penalty-based push-sum algorithm introduced and analyzed in Section~\ref{sec:ps} to the distributed optimization formulated below.

We consider the following generation and demand capacities in the system:
\[p_i\in [p_i^m, P_i^M], \, i\in \EN_g, \quad p_j\in [p_j^m, P_j^M], \, j\in \EN_d.\]
The cost function $C_i: \R\to\R$ of each generator $i\in\EN_g$ is:
\begin{align}\label{eq:costfun}
 C_i(p_i) = \begin{cases}
             a_ip_i^2 + b_ip_i + c_i, \, &\mbox{if }p_i\in [p_i^m, P_i^M],\cr
             (2a_ip_i^m + b_i)p_i, \, &\mbox{if }p_i\le p_i^m,\cr
             (2a_ip_i^M + b_i)p_i, \, &\mbox{if }p_i\ge p_i^M.
            \end{cases}
\end{align}
where $a_i, b_i, c_i$ are positive fitting parameters. Thus, the cost functions are strongly convex functions.

The utility function $U_j:\R\to\R$ of each demand $j\in\EN_d$ has the following properties:
\begin{align}\label{eq:utfun}
 &U_j(0)=0, \quad \frac{d U_j}{d p_j}>0 \,\mbox{ (non-decreasing)},\cr
 &\exists K_1,K_2: \frac{d U_j}{d p_j}<K_1, \cr
 & \quad K_2\le\frac{d^2U_j}{(d p_j)^2}\le 0 \,\mbox{ (get saturated)}.
\end{align}
Thus, the utility functions are concave functions.

Let $\bp\in\R^N$ be the vector with coordinates $p_i$, $i\in[N]$.
The goal in the power grid is to solve distributively the following energy management problem\footnote{For more details on the problem formulation see \cite{Zhao2017}.}:
\begin{equation}\label{eq:problem}
 \min_{\bp} \, \sum_{i\in \EN_g} C_i(p_i) - \sum_{j\in\EN_d} U_j(p_j)
 \end{equation} \addtocounter{equation}{-1}\vspace{-0.3cm}
 \begin{subequations}
    \begin{align}
s.t. \, &\sum_{i\in \EN_g}(p_i - l_ip_i^2) = \sum_{j\in\EN_d} p_j\label{eq:pr2}\\
 &p_i^m\le p_i\le P_i^M, \quad i\in \EN_g\label{eq:pr3}\\
 &p_j^m\le p_j\le P_j^M, \quad j\in \EN_d\label{eq:pr4}.
\end{align}
\end{subequations}
In the problem above the constraint \eqref{eq:pr2} corresponds to the balance between the generated and the demanded power in the network, where each parameter $l_i$, $i\in\EN_g$, corresponds to the coefficient of the transmission losses induced by the generator $i$ and satisfies $0\le l_i <a_i$.
Note that the problem~\eqref{eq:problem} is non-convex due to the non-convex constraint defined by  \eqref{eq:pr2}.

\subsection{Problem reformulation with constraints based on local information}
To implement the distributed penalized push-sum algorithm to the energy management problem~\eqref{eq:problem}, we need to find its appropriate convex reformulation such that any solution to this reformulation provides a solution for \eqref{eq:problem}. Moreover, as the constraint \eqref{eq:pr2} contains the information on the ``loss'' parameter $l_i$ of each $i\in\EN_g$, we aim to find a reformulation, where no constraint requires knowledge about the local properties of other nodes in the network.
First of all, let us notice that the problem~\eqref{eq:problem} is equivalent to the following one:
\begin{equation}\label{eq:problemRef0}
 \min_{\bp,\bv} \, \sum_{i\in \EN_g} C_i(p_i) - \sum_{j\in\EN_d} U_j(p_j)
 \end{equation} \addtocounter{equation}{-1}\vspace{-0.3cm}
 \begin{subequations}
    \begin{align}
s.t. \, &\sum_{i\in \EN_g}(p_i - v_i) = \sum_{j\in\EN_d} p_j\label{eq:prRef2}\\
 &p_i^m\le p_i\le P_i^M, \quad i\in \EN_g\label{eq:prRef3}\\
 &p_j^m\le p_j\le P_j^M, \quad j\in \EN_d\label{eq:prRef4}\\
 &v_i = l_ip_i^2, \quad i\in \EN_g\label{eq:prRef5},
\end{align}
\end{subequations}
where $\bv\in\R^{N_g}$ is the vector with coordinates $v_i$, $i\in\EN_g$. Thus, the strategy $p_i$ of each generator is augmented by the auxiliary parameter $v_i$.
However, the problem~\eqref{eq:problemRef0} is still non-convex due to the constraints in \eqref{eq:prRef5}. Following the idea in \cite{Zhao2017}, we present a new reformulation, where each non-convex equality constraint is replaced by the corresponding convex inequality one. Thus, we obtain the convex optimization problem
\begin{equation}\label{eq:problem_Ref}
 \min_{\bp,\bv} \, \sum_{i\in \EN_g} C_i(p_i) - \sum_{j\in\EN_d} U_j(p_j)
\end{equation} \addtocounter{equation}{-1}\vspace{-0.3cm}
\begin{subequations}
 \begin{align}
s.t. \, &\sum_{i\in \EN_g}(p_i - v_i) = \sum_{j\in\EN_d} p_j\label{eq:pr_Ref2}\\
 &p_i^m\le p_i\le P_i^M, \quad i\in \EN_g\label{eq:pr_Ref3}\\
 &p_j^m\le p_j\le P_j^M, \quad j\in \EN_d\label{eq:pr_Ref4}\\
 &v_i \ge l_ip_i^2, \quad i\in \EN_g\label{eq:pr_Ref5}.
 \end{align}
\end{subequations}
Next, we establish the relation between the convex problem~\eqref{eq:problem_Ref} and the initial one~\eqref{eq:problem}. This will be done under the following two technical assumptions.
\begin{assumption}\label{assum:bounds}
 The upper and low bounds for the feasible power generation and demand (see \eqref{eq:pr3} and \eqref{eq:pr4}) satisfy
 \[\sum_{j\in\EN_d}P^M_j\ge\sum_{i\in \EN_g}(p^m_i - l_i(p^m_i)^2).\]
\end{assumption}
The assumption above repeats the sufficient condition for an appropriate convex reformulation of the problem~\eqref{eq:problem} presented in \cite{Zhao2017} (see also Remark~1 in \cite{Zhao2017}).
The next assumption is the Slater constraint qualification for the reformulated problem~\eqref{eq:problem_Ref}. It will enable the relation analysis based on the Karush-Kuhn-Tacker conditions for the optimal primal dual pair of the problem~\eqref{eq:problem_Ref}.
\begin{assumption}\label{assum:Slaters}
 There exists a feasible point $(\hat{\bp}, \hat{\bv})$ for the problem~\eqref{eq:problem_Ref} such that
$\sum_{i\in \EN_g}(\hat p_i - \hat v_i) = \sum_{j\in\EN_d} \hat p_j$, $p_i^m<\hat p_i< P_i^M$, $i\in \EN_g$, $p_j^m<\hat p_j< P_j^M$,  $j\in \EN_d$, $\hat v_i > l_i\hat p_i^2 $, $i\in \EN_g$.
\end{assumption}
The next proposition states the desired relation between the initial problem~\eqref{eq:problem} and the convex one~\eqref{eq:problem_Ref} above.
\begin{prop}\label{prop:equiv}
  Let Assumptions~\ref{assum:bounds} and \ref{assum:Slaters} hold. Then any solution to the  problem~\eqref{eq:problem_Ref} is a solution to the problem~\eqref{eq:problem}.
\end{prop}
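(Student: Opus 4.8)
The plan is to exploit that \eqref{eq:problem_Ref} is a convex relaxation of the equivalent reformulation \eqref{eq:problemRef0} of \eqref{eq:problem}, obtained by replacing each equality $v_i = l_i p_i^2$ with the inequality \eqref{eq:pr_Ref5}. Since this enlarges the feasible set while leaving the objective (which does not depend on $\bv$) unchanged, the optimal value of \eqref{eq:problem_Ref} is at most the optimal value of \eqref{eq:problem}. Hence it suffices to show that at \emph{every} minimizer $(\bp^*,\bv^*)$ of \eqref{eq:problem_Ref} the constraints \eqref{eq:pr_Ref5} are all active, i.e. $v_i^* = l_i (p_i^*)^2$ for all $i\in\EN_g$: then $\bp^*$ is feasible for \eqref{eq:problemRef0}, and since it attains the relaxed optimal value it is optimal for \eqref{eq:problem} as well.

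To prove activeness I would invoke the KKT conditions, whose necessity at $(\bp^*,\bv^*)$ is guaranteed by the Slater point of Assumption~\ref{assum:Slaters} together with the convexity of \eqref{eq:problem_Ref}. Attach a multiplier $\lambda\in\R$ to the balance constraint \eqref{eq:pr_Ref2}, multipliers $\mu_i\ge 0$ to \eqref{eq:pr_Ref5} written as $l_i p_i^2 - v_i \le 0$, and the usual nonnegative multipliers to the box constraints \eqref{eq:pr_Ref3}--\eqref{eq:pr_Ref4}. Because $v_i$ enters only the balance constraint and its own inequality, stationarity in $v_i$ reads $-\lambda-\mu_i=0$, i.e. $\mu_i=-\lambda$ for every $i\in\EN_g$. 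Consequently, if I can show $\lambda<0$, then $\mu_i=-\lambda>0$, and complementary slackness $\mu_i(l_i p_i^2-v_i)=0$ forces $v_i^*=l_i(p_i^*)^2$ for all $i$, which is exactly the required activeness.

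The crux is therefore establishing $\lambda<0$, and this is the step I expect to be the main obstacle, since it is where the problem data genuinely enter. One route is perturbative: suppose some constraint \eqref{eq:pr_Ref5} were slack at the optimum. Then, using \eqref{eq:pr_Ref2}, $\sum_{i\in\EN_g}(p_i^*-l_i(p_i^*)^2) > \sum_{i\in\EN_g}(p_i^*-v_i^*) = \sum_{j\in\EN_d} p_j^*$, so net generation strictly exceeds demand. If some demand is below its cap $P_j^M$, raising $p_j$ slightly while lowering a slack $v_i$ by the same amount preserves \eqref{eq:pr_Ref2}--\eqref{eq:pr_Ref5} and strictly decreases the objective because $U_j$ is increasing by \eqref{eq:utfun}; symmetrically, if some generator is above its floor $p_i^m$, lowering $p_i$ while adjusting a slack $v_{i'}$ strictly decreases the objective because $C_i$ is increasing. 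Either case contradicts optimality. The only configuration escaping both perturbations is all demands saturated at $P_j^M$ and all generators at $p_i^m$, which would yield $\sum_{j\in\EN_d}P_j^M < \sum_{i\in\EN_g}(p_i^m-l_i(p_i^m)^2)$, directly contradicting Assumption~\ref{assum:bounds}. The same conclusion can be read off the KKT system itself: stationarity in an interior demand coordinate $p_j$ gives $\lambda=-U_j'(p_j^*)<0$ by \eqref{eq:utfun}, and Assumption~\ref{assum:bounds} is precisely what excludes the degenerate corner in which no such interior coordinate is available.

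With $v_i^*=l_i(p_i^*)^2$ established for all $i\in\EN_g$, the vector $\bp^*$ satisfies \eqref{eq:pr2}--\eqref{eq:pr4} and is thus feasible for \eqref{eq:problem}. Since the objective does not involve $\bv$, the value of $\bp^*$ in \eqref{eq:problem} equals the optimal value of the relaxation \eqref{eq:problem_Ref}, which is a lower bound for the optimal value of \eqref{eq:problem}; therefore $\bp^*$ is optimal for \eqref{eq:problem}, completing the argument.
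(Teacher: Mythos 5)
Your proposal is correct and arrives at the same destination as the paper's proof --- activeness of every constraint \eqref{eq:pr_Ref5} at any minimizer, reduction to the corner configuration $p_j^*=P_j^M$ for all $j\in\EN_d$ and $p_i^*=p_i^m$ for all $i\in\EN_g$, and a contradiction with Assumption~\ref{assum:bounds} via the balance constraint --- but the key exclusion step is executed on the other side of duality. The paper argues entirely through the multipliers: slackness of one constraint \eqref{eq:pr_Ref5} gives $\mu_{i'}^*=0$ by complementary slackness, stationarity in $v_i$ then forces $\lambda^*=0$ and hence $\mu_i^*=0$ for all $i$, and stationarity in the $p$-coordinates combined with $U_j'>0$ and $C_i'>0$ pins the optimum to the corner. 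Your main argument is the primal counterpart: an explicit feasible exchange (shift mass between a slack $v_i$ and a demand below its cap, or a generator above its floor) that strictly improves the objective, leaving only the same corner to be excluded by Assumption~\ref{assum:bounds}. The perturbation route is more elementary --- it uses only feasibility, monotonicity of $C_i$ and $U_j$, and optimality, and in fact never needs the Slater condition or the KKT system you set up at the outset, which becomes unused scaffolding on that route. Your secondary, purely dual reading ($\lambda=-U_j'(p_j^*)<0$ whenever some demand is not at its cap) is essentially the paper's argument and, as you concede, still needs the corner case handled separately; as a side remark, your stationarity condition $\mu_i=-\lambda$ is the one consistent with the Lagrangian as displayed, whereas the paper's \eqref{eq:KKT3} carries a harmless sign slip that does not affect its conclusion.
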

\begin{proof}
 See Appendix.
\end{proof}

The optimization problem \eqref{eq:problem_Ref} can be considered a particular case of the general distributed optimization problem \eqref{eq:const_opt}.
Indeed, let $\zb = (\bp, \bv)$ be the vector of joint strategies of the generators and responsive demands in the network, $F_i(\zb)=C_i(p_i)$, if $i\in\EN_g$, $F_j(\zb)=-U_i(p_j)$, if $j\in\EN_g$, and the constraints \eqref{eq:pr_Ref2}-\eqref{eq:pr_Ref5} be distributed over these agents as follows:
\begin{align*}
 LC_i = \{&c_i^1(\zb) = p_i - P_i^M\le 0, c_i^2(\zb) = p_i^m-p_i\le 0, \cr
 &c_i^3(\zb)=\sum_{i\in \EN_g}(p_i - v_i) - \sum_{j\in\EN_d} p_j\le 0,\cr
 &c_i^4(\zb)=-\sum_{i\in \EN_g}(p_i - v_i) + \sum_{j\in\EN_d} p_j\le 0,\cr
 &c_i^5(\zb) = (l_ip_i^2-v_i)1_{\{p_i\in[p_i^m,P_i^M]\}}\cr
 &\hspace{1.2cm}+(2l_ip_i^m-v_i)1_{\{p_i<p_i^m\}} \cr
 &\hspace{1.2cm}+(2l_ip_i^M-v_i)1_{\{p_i>p_i^M\}}\le 0\}, \cr
 &\hspace{3.5cm} \mbox{ if $i\in\EN_g$},\cr
 LC_j = \{&c_j^1(\zb) = p_j - P_j^M\le 0, c_j^2(\zb) = p_j^m-p_j\le 0\}, \cr
 &\hspace{3.5cm} \mbox{ if $j\in\EN_d$}.
\end{align*}
Note that the condition $c_i^5(\zb)\le0$ for $i\in\EN_g$ above corresponds to the constraint \eqref{eq:pr_Ref5}. We modified this constraint without changing the problem (due to existence of the hard constraints \eqref{eq:pr_Ref3}) to be able to use the result from Theorem~\ref{th:main_ps} requiring bounded gradients of the constraint functions.
Given the properties of the cost and utility functions (see~\eqref{eq:costfun} and \eqref{eq:utfun}), the objective function $F(\zb)=\sum_{i\in\EN_g}F_i(\zb)+\sum_{j\in\EN_d}F_j(\zb)$ is strongly convex and, hence, inf-compact. Moreover, Assumptions~\ref{assum:bound_grad} and ~\ref{assum:Lipschitz} hold for the gradients of the functions $F_i$, $F_j$ and local constraint functions $c_i^1$, $c_i^2$, $c_i^3$, $c_i^4$, $c_i^5$, $c_j^1$, $c_j^2$, $i\in\EN_g$, $j\in\EN_d$.
Thus, the problem \eqref{eq:problem_Ref} is equivalent to
\begin{align}\label{eq:fref}
 \min_{\zb} &\, F(\zb)=\sum_{i\in\EN_g}F_i(\zb)+\sum_{j\in\EN_d}F_j(\zb),\cr
s.t. \, &\zb\in \left(\cap_{i\in\EN_g} LC_i\right)\bigcap \left(\cap_{j\in\EN_d} LC_j\right).
 \end{align}
and the following result can be formulated
\begin{cor}\label{cor:em}
 Let Assumptions~\ref{assum:bounds} and \ref{assum:Slaters} hold for the problem \eqref{eq:problem}. Then under an appropriate choice of the parameters $a_t$, $r_t$ (see Assumption~\ref{assum:Slaters}), the penalty-based push-sum algorithm~\eqref{eq:constrps} applied to the reformulated problem~\eqref{eq:fref} converges to the optimal solution to \eqref{eq:problem} as time tends to infinity.
\end{cor}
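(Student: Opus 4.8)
The plan is to verify that the convex reformulation \eqref{eq:fref} meets every hypothesis of Theorem~\ref{th:main_ps}, invoke that theorem (in its strictly-convex refinement, Remark~\ref{rem:strictconv}) to obtain convergence of the iterates to the unique minimizer of \eqref{eq:problem_Ref}, and then transport this conclusion back to the original problem \eqref{eq:problem} through Proposition~\ref{prop:equiv}. So the corollary is essentially a chaining of already-established facts; the work lies in checking that the model of Section~\ref{sec:sg} falls exactly into the framework of Section~\ref{sec:ps}.

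First I would address the objective. Since each $C_i$ in \eqref{eq:costfun} is strongly convex and each $-U_j$ is convex by \eqref{eq:utfun}, the aggregate $F$ is convex and inf-compact, as needed for Theorem~\ref{th:main_ps}. For uniqueness I would argue that, although $F$ depends only on the power variables $\bp$ and not on the auxiliary $\bv$, the loss inequality \eqref{eq:pr_Ref5} is active at every optimum (this is precisely what Proposition~\ref{prop:equiv} exploits to recover the original equality $v_i=l_ip_i^2$), so $\bv^*$ is pinned by $\bp^*$. Hence \eqref{eq:fref} has a unique solution $\zb^*=(\bp^*,\bv^*)$, and the reasoning behind Remark~\ref{rem:strictconv} applies, delivering convergence of all local iterates $\zb_i(t)$, not merely of their running average.

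The delicate step, and the one I expect to be the main obstacle, is verifying Assumptions~\ref{assum:bound_grad} and \ref{assum:Lipschitz} for the constraint functions, specifically $c_i^5$. The naive loss constraint $v_i\ge l_ip_i^2$ has gradient component $2l_ip_i$, which is unbounded on $\R^d$ and would violate the uniformly-bounded-gradient requirement of Theorem~\ref{th:main_ps}. Here I would use the piecewise-affine extension defining $c_i^5$: outside the box $[p_i^m,P_i^M]$ the quadratic is replaced by its tangent affine pieces, so $\nabla c_i^5$ is continuous, globally bounded, and globally Lipschitz. I must then show this modification is harmless, namely that because the hard box constraints \eqref{eq:pr_Ref3} force $p_i\in[p_i^m,P_i^M]$ at every feasible point, $c_i^5$ agrees with $l_ip_i^2-v_i$ on the feasible set, so the modified and original feasible sets (and hence the solution sets) coincide. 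The linear and box constraints $c_i^1,\dots,c_i^4,c_j^1,c_j^2$ trivially have bounded Lipschitz gradients, and $\nabla C_i$, $\nabla U_j$ are bounded and Lipschitz by \eqref{eq:costfun}--\eqref{eq:utfun}, which completes Assumptions~\ref{assum:bound_grad}--\ref{assum:Lipschitz}.

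With these checks done I would fix step-size and penalty sequences $\{a_t\},\{r_t\}$ satisfying Assumption~\ref{assum:parameters} (for instance the power-law choice noted after that assumption) and assume, as throughout, that $\{G(t)\}$ is $B$-strongly connected. Theorem~\ref{th:main_ps} together with the uniqueness argument above then yields $\zb_i(t)\to\zb^*$ for every $i$, where $\zb^*$ is the unique solution of \eqref{eq:problem_Ref}. Finally, Proposition~\ref{prop:equiv}---valid under Assumptions~\ref{assum:bounds} and \ref{assum:Slaters}---guarantees that $\zb^*$ solves the original energy-management problem \eqref{eq:problem}, so the algorithm converges to its optimal solution, as claimed. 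Everything outside the constraint-surgery argument of the third paragraph is a routine verification of hypotheses that have already been granted in the excerpt.
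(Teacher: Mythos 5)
Your proposal is correct and follows essentially the same route as the paper, whose entire justification for the corollary is the discussion immediately preceding it: cast \eqref{eq:problem_Ref} as an instance of \eqref{eq:const_opt}, check inf-compactness and Assumptions~\ref{assum:bound_grad}--\ref{assum:Lipschitz} (including exactly the tangent-affine surgery on $c_i^5$ made harmless by the hard box constraints \eqref{eq:pr_Ref3}), invoke Theorem~\ref{th:main_ps}, and transfer back via Proposition~\ref{prop:equiv}. If anything you are more careful than the paper on one point: the paper simply asserts that $F$ is strongly convex in $\zb=(\bp,\bv)$, whereas you correctly observe that $F$ does not depend on $\bv$ and recover uniqueness of $\zb^*$ by noting that the loss constraint is active at the optimum, pinning $\bv^*$ to $\bp^*$.
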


\subsection{Simulation Results}

In this section we will substantiate our theoretic result stated in Corollary~\ref{cor:em} for the energy management problem \eqref{eq:problem} with simulations. For this purpose, we verify Theorem~\ref{th:main_ps} by comparing the optimum $\bm{p}^*$ of the problem \eqref{eq:problem} with the iterations of the penalty-based push-sum algorithm presented in Section~\ref{sec:ps}.
% We simultaneously verify Proposition \ref{pr:connection} and \ref{prop:equiv} by applying the Push-Sum Algorithm to the application stated in section \ref{sec:application}. To do so, the generator and consumer nodes in the network are connected with a time-varying communication graph and it is shown that the error between the result of the Push-Sum Algorithm and the optimal value $\bm{p}^*$ converges to zero.\\
% \newline

For our simulation we use a small setup of two generator and two consumer nodes. The cost functions of the generators are as in \eqref{eq:costfun}, whereas the utility functions of the demand nodes are
\begin{equation}
U_j(p_j) =  \begin{cases}
\begin{array}{cc}
\omega_jp_j - \alpha_jp_j^2, &  p_j \leq \frac{\omega_j}{2K_j\alpha_j} \\
(\omega_j - \frac{\omega_j}{K_j})p_j-\frac{\omega^2_j}{4K_j^2\alpha_j}, & p_j > \frac{\omega_j}{2K_j\alpha_j}
\end{array} \end{cases},
\end{equation}
where $K_j<1$ is a positive constant for each $j\in\EN_d$. Thus, the properties \eqref{eq:utfun} are met.
For the parameters for the cost functions, the lower and upper bounds on $p_i$ and $p_j$, as well as for the transmission loss coefficients $l_i$, $i\in\EN_g$, we rely on settings in \cite{Zhao2017}.
We model our time-varying communication architecture with a changing signal $s_t$ that chooses the current graph $G(s(t))$ sequentially from the set $\mathcal{G} = \left \lbrace G_1, G_2 \right \rbrace$, where $G_1$ and $G_2$ are not strongly connected but their union is. The communication architecture is depicted in Figure~\ref{fig:commarch}.
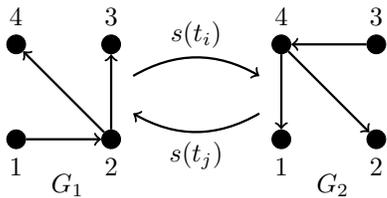
\begin{figure}[h!]
	\centering

	\begin{tikzpicture}
	
	\tikzstyle{mybox} = [draw=black, fill=white!10, very thick,
	rectangle, rounded corners, inner sep=10pt, inner ysep=10pt]
	
	\tikzstyle{boxdec} = [draw=black, fill=yellow!10, very thick,
	rectangle, rounded corners, inner sep=10pt, inner ysep=10pt]
	\tikzstyle{box} = [draw=black, fill=yellow!10, very thick,
	rectangle, rounded corners, inner sep=10pt, inner ysep=10pt]
	
	\tikzstyle{gnode} = [draw,shape=circle,fill=black,  inner sep=2.5pt]

	\node[gnode, label=below:{$1$}] at (0,0) (n1){};
	\node[gnode, right =  of n1, label=below:{$2$}](n2){};
	\node[gnode,  above = of n2, label=above:{$3$}](n3){};
	\node[gnode,  left = of n3, label=above:{$4$}](n4){};

	\node[gnode, right= 2cm of n2, label=below:{$1$}]  (g1){};
	\node[gnode, right =  of g1, label=below:{$2$}](g2){};
	\node[gnode,  above = of g2, label=above:{$3$}](g3){};
	\node[gnode,  left = of g3, label=above:{$4$}](g4){};

	\node[coordinate, above right = 0.75 and 0.2 of n2](loop1) {};
	\node[coordinate, above left = 0.75 and 0.2 of g1](loop2) {};
	
	\node[coordinate, above right = 0.25 and 0.2 of n2](loop3) {};
	\node[coordinate, above left = 0.25 and 0.2 of g1](loop4) {};

	\node[below right = 0.25 and 0.25 of n1](loop5) {$G_1$};
	\node[below right = 0.25 and 0.25 of g1](loop6) {$G_2$};

	\draw[->, line width=0.3mm] (n1) -- (n2);
	\draw[->, line width=0.3mm] (n2) -- (n3);
	\draw[->, line width=0.3mm] (n2) -- (n4);
	
	\draw[->, line width=0.3mm] (g4) -- (g2);
	\draw[->, line width=0.3mm] (g4) -- (g1);
	\draw[->, line width=0.3mm] (g3) -- (g4);

	\path[->, line width=0.3mm]
	(loop1) edge[bend left] node [above] {$s(t_i)$} (loop2)
	(loop4) edge[bend left] node [below] {$s(t_j)$} (loop3);

	\end{tikzpicture}
	
	\caption{Time-dependent directed communication architecture.}
	\label{fig:commarch}
\end{figure}

The results of our simulations are shown in Table~\ref{fig:compareproblems} and Figure \ref{fig:errorpushsum4nodesp}. The table states the optimal values $\bm{p}^*$ of the generators and consumers together with the result of the push-sum algorithm after $3\times 10^4$  iterations.
%To compare the values we average over the result vectors $\bm{p}_i$ of every node, such that $\bm{\bar{p}} = (\frac{1}{N} \sum_{i=1}^n p_{ik})_{k=1}^n$. It can be observed that...\\
The figure shows the convergence of the relative error $\mbox{error}_k = \frac{|p_k-p_k^*|}{p_k^*}$, $k\in[4]$, to zero as time runs.
We can notice that the relative error at the demand nodes ($k=3,4$) approaches $0$ already after $500$ iterations, whereas the generator's errors need significantly more time to get close to $0$. This effect is due to a more complex structure of generators' constraints, for which an optimal choice of penalty parameters needs to be studied in the future work.
% This relative error between the optimal value of the $k$-th element of $p^*$ and the result of the penalized push-sum algorithm is defined as $error_k = \sum_{i=1}^{n} \frac{|p_{k}^* - p_{ik}|}{p_{k}^*} $, where $p_{ik}$ is the value of $p_k$ at node $i$. After $3\times 10^4$ iterations all errors are below the value 7.7

\renewcommand{\arraystretch}{1.5}
\begin{table}[h!]
	\centering
	\begin{tabular}{c|c c} \hline
		        &       $\bm{p}^*$       &  $\bm{{p}}$ \\ \hline
		$p_1^G$ &            81.98             &                   81.69              \\ \hline
		$p_2^G$ &            124.80            &                  122.75                   \\ \hline
		$p_1^D$ &            100.34            &                   99.61                  \\ \hline
		$p_2^D$ &            100.00           &                   99.64                   \\ \hline
	\end{tabular}
	\caption{Optimal $\bm{p}^*$ vs algorithm output after $3\times 10^4$ iterations.}
	\label{fig:compareproblems}
\end{table}

\begin{figure}[h!]
	\centering
	\includegraphics[width=0.9\linewidth]{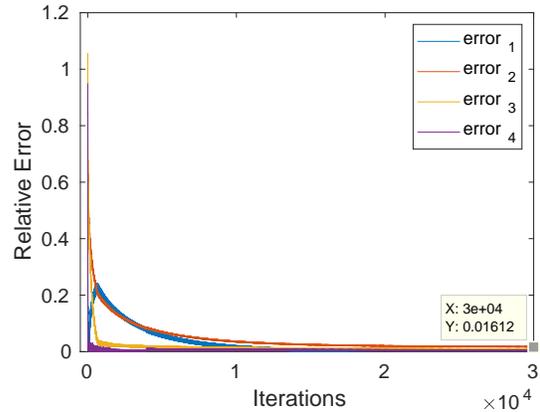}
	\caption{Convergence of the relative error $\mbox{error}_k$, $k\in[4]$, between optimum $p_k^*$ and algorithm output $p_k$.}
	\label{fig:errorpushsum4nodesp}
\end{figure}

\section{Conclusion}\label{sec:conclusion}
In this paper we extended the distributed push-sum algorithm to the case of constrained convex optimization. The penalty-based push-sum algorithm was presented and its convergence to a system's optimum was proven. We demonstrated applicability of the proposed procedure to distributed energy management in smart grid.
The future work will focus on such questions as the convergence rate of the penalty-based push-sum algorithm and its dependence on the communication topology as well as an optimal choice of penalty functions and penalty parameters.

\bibliographystyle{plain}

%\bibliography{document}

%\appendix
\appendix

\emph{Proof of Proposition~\ref{prop:equiv}}
\begin{proof}
Let us consider the Lagrangian function defined for the convex problem~\eqref{eq:problem_Ref}, namely
\begin{align*}
 &L(\bp,\bv,\lambda,\bmu,\bgamma,\btheta) = \sum_{i\in \EN_g} C_i(p_i) - \sum_{j\in\EN_d} U_j(p_j)\cr
 &+\lambda(\sum_{i\in \EN_g}(p_i - v_i) - \sum_{j\in\EN_d} p_j)+\sum_{i\in\EN_g}\mu_i(l_ip_i^2 - v_i)\cr
 &+\sum_{k\in[N]}\gamma_k(p^m_k - p_k) + \sum_{k\in[N]}\theta_k(p_k - P^M_k),
\end{align*}
where  $\bmu$, $\bgamma$, $\btheta$ are the vectors of the Lagrangian multipliers with corresponding dimensions.
As the problem~\eqref{eq:problem_Ref} is convex and Assumption~\ref{assum:Slaters} holds, we can use necessary and sufficient Karush-Kuhn-Tacker conditions for $[(\bp^*,\bv^*), (\lambda^*,\bmu^*,\bgamma^*,\btheta^*)]$ being an optimal primal dual pair\footnote{Note that existence of an optimal primal dual is guaranteed in this case as well (see, for example, Proposition 5.3.1 in \cite{Bertsekas}).}.
Thus,
\begin{subequations}\label{eq:KKT}
\begin{align}
 &\frac{\partial L}{\partial p_i} = \frac{d C_i(p_i^*)}{d p_i} + \lambda^* + 2\mu_i^*l_ip_i^* - \gamma_i^* + \theta_i^* = 0, \, i\in\EN_g, \label{eq:KKT1}\\
 &\frac{\partial L}{\partial p_j} = -\frac{d U_j(p_j^*)}{d p_j} - \lambda^*  - \gamma_j^* + \theta_j^* = 0, \, j\in\EN_d, \label{eq:KKT2}\\
 &\frac{\partial L}{\partial v_i} = - \lambda^* + \mu_i^*= 0, \, i\in\EN_g, \label{eq:KKT3}\\
 &\gamma_k^*(p_k^m-p_k^*) = 0, \, \gamma_k^*\ge 0, \, k\in[N], \label{eq:KKT4}\\
 &\theta_k^*(p_k^*-P^M_k) = 0, \, \theta_k^*\ge 0, \, k\in[N], \label{eq:KKT5}\\
 &\mu_i^*(l_i(p_i^*)^2-v_i^*) = 0, \, \mu_i^*\ge 0, \, i\in\EN_g\label{eq:KKT6}.
\end{align}
\end{subequations}
Suppose that
\begin{align}\label{eq:suppose}
 v_{i'}^* > l_{i'}(p_{i'}^*)^2 \quad \mbox{for some $i'\in\EN_g$}.
\end{align}
Then, due to \eqref{eq:KKT6}, $\mu_{i'}^*=0$. Hence, according to \eqref{eq:KKT3}, $\lambda^* = 0$ and $\mu_{i}^*=0$ for all $i\in\EN_g$.
Next, since Assumption~\ref{assum:Slaters} guarantees that $p_k^m\ne P_k^M$ for all $k\in[N]$, $\gamma_k^*=0$ or $\theta_k^*=0$ for all $k\in[N]$ (see \eqref{eq:KKT4} and \eqref{eq:KKT5}). Let us consider any $j\in\EN_d$. Due to the fact that $\frac{d U_j(p_j^*)}{d p_j}>0$ (see \eqref{eq:utfun}), $\gamma_j^*\ge 0$ (see \eqref{eq:KKT4}), and condition \eqref{eq:KKT2}, we conclude that $\gamma_j^*=0$ and $\theta_j^*>0$. Thus, $p_j^*=P^M_j$ (see \eqref{eq:KKT5}) for all $j\in\EN_d$. Analogously, for any $i\in\EN_g$, due to the property of the cost functions, namely $\frac{d C_i(p_i^*)}{d p_i}>0$ (see \eqref{eq:costfun}), we obtain that $\gamma_i^*>0$ and $\theta_i^*=0$ (see \eqref{eq:KKT1}). Thus, $p_i^*=p^m_i$ (see \eqref{eq:KKT4}) for all $i\in\EN_g$.

Next, taking into account \eqref{eq:suppose} and the feasibility conditions \eqref{eq:pr_Ref2} and \eqref{eq:pr_Ref5}, we get
\begin{align}\label{eq:contr}
  \sum_{j\in\EN_d}P^M_j&=\sum_{j\in\EN_d} p_j^*  = \sum_{i\in \EN_g}(p^*_i - v^*_i)\cr
 &<\sum_{i\in \EN_g}(p^m_i - l_i(p^m_i)^2),
\end{align}
which contradicts Assumption~\ref{assum:bounds}. Thus, \eqref{eq:suppose} cannot hold, which implies that $v_{i}^* = l_{i}(p_{i}^*)^2$ for all $i\in\EN_g$.
Hence, the optimal solution $(\bp^*, \bv^*)$ to the problem \eqref{eq:problem_Ref} necessarily satisfies the feasibility conditions of the problem \eqref{eq:problemRef0}, which is equivalent to the initial problem \eqref{eq:problem}. By noticing that the objective function in the optimization problems is the same, we conclude the proof.
\end{proof}
\bigskip

\emph{Supporting Theorems}

\begin{theorem}~\cite{A1}\label{th:th2}
Consider  the  sequences $\{\zb_i(t)\}_t$, $i\in [n]$, generated  by the algorithm~\eqref{eq:constrps}. Assume that the graph sequence
$\{G(t)\}$ is $B$-strongly connected and Assumptions~\ref{assum:bound_grad} and \ref{assum:parameters} hold.
%Then for some constants $\delta$ and $\lambda$  satisfying $\delta\ge\frac{1}{n^{nS}}$ and  $\lambda\le \left(1-\frac{1}{n^{nS}}\right)^{1/S}$ for all $i\in[n]$ we have
%\begin{align*}
% &\left\|\zb_i(t+1)-\bar{\zx}(t) \right\| \cr
% &\le \frac{8}{\delta}\left(\lambda^t\sum_{j=1}^{n}\|\zx_j(0)\|_1+\sum_{s=1}^{t}\lambda^{t-s}\sum_{j=1}^{n}\|\ze_j(s)\|_1\right),
%\end{align*}
%where $\ze_j(s) = a_t[\fb_i(\zb_i(t+1))+r_t\bpsi_i(\zb_i(t+1))]$.

(a) Then  $\lim_{t\to\infty}\|\zb_i(t+1)-\bar{\zx}(t) \|=0$ for all $i\in[n]$.

Moreover,

(b) If $\{b_t\}$ is a non-increasing positive scalar sequence with  $\sum_{t=1}^{\infty}b_ta_t\|\fb_i(\zb_i(t+1))+r_t\bpsi_i(\zb_i(t+1))\|_1<\infty$  for all $i\in [n],$ then
$\sum_{t=0}^{\infty}b_{t}\left\|\zb_i(t+1)-\bar{\zx}(t) \right\|<\infty$ for all $i$,
% &\le\sum_{t=0}^{\infty}\frac{8b_{t+1}}{\delta}\left(\lambda^t\sum_{j=1}^{n}\|\zx_j(0)\|_1+\sum_{s=1}^{t}\lambda^{t-s}\sum_{j=1}^{n}\|\ze_j(s)\|_1\right)\cr
% &\qquad\qquad\qquad\qquad\qquad\qquad<\infty \mbox{ for all $i$,}
where $\|\cdot\|_1$ is the $l^1$-norm in $\mathbb{R}^d$.
\end{theorem}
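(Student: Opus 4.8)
The plan is to read \eqref{eq:constrps} as a \emph{perturbed push-sum consensus} scheme and to extract a single disagreement estimate from which both (a) and (b) drop out. First I would vectorize the averaging steps \eqref{eq:ps1}--\eqref{eq:ps2}: defining the matrix $A(t)$ by $A_{ij}(t)=\tfrac{1}{d_j(t)}$ whenever $j\in N^{in}_i(t)$ and $A_{ij}(t)=0$ otherwise, these read $\boldsymbol w_i(t+1)=\sum_j A_{ij}(t)\zx_j(t)$ and $y_i(t+1)=\sum_j A_{ij}(t)y_j(t)$. Since $d_j(t)$ is exactly the out-degree of $j$, each column of $A(t)$ sums to one, so $A(t)$ is column-stochastic; together with $y_i(0)=1$ this gives the invariant $\sum_i y_i(t)=n$ and, as already recorded in \eqref{eq:constraverage}, the mean-preservation identity $\tfrac1n\sum_i\boldsymbol w_i(t+1)=\bzx(t)$. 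I would write the injected perturbation of agent $j$ as $\ze_j(s)=a_s[\fb_j(\zb_j(s+1))+r_s\bpsi_j(\zb_j(s+1))]$, which is exactly the term subtracted in \eqref{eq:ps4}.

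The core step, and the one I expect to be the main obstacle, is a geometric \emph{ergodicity} bound for the backward products $A(t{:}s)=A(t)A(t-1)\cdots A(s)$. Unrolling the $\boldsymbol w$- and $y$-recursions expresses $\boldsymbol w_i(t+1)$ and $y_i(t+1)$ as $A(t{:}0)$ applied to the initial data plus the accumulated $\ze_j(s)$. Under $B$-strong connectivity, products of column-stochastic matrices contract to a common-column limit at a uniform rate: there are constants $C>0$, $\lambda\in(0,1)$ depending only on $n,B$, and stochastic vectors $\phi(t)$ with $\big|[A(t{:}s)]_{ij}-\phi_i(t)\big|\le C\lambda^{t-s}$ for all $i,j$; simultaneously the mass stays bounded away from zero, $y_i(t)\ge\delta>0$ for all $i,t$. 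Establishing these two facts --- geometric mixing and the uniform positive lower bound on $y_i(t)$ --- is the technically heavy part. It is precisely the push-sum machinery of \cite{A1} and rests on the fact that over windows of length $B$ the union graph is strongly connected, so a fully positive column with uniformly bounded-below entries appears within each window.

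Granting the mixing bound, the common factor $\phi_i(t)$ cancels in the ratio $\zb_i(t+1)=\boldsymbol w_i(t+1)/y_i(t+1)$, and using $y_i(t+1)\ge\delta$ I would obtain the single key estimate
\begin{align*}
\|\zb_i(t+1)-\bzx(t)\|\le C'\lambda^{t}\sum_{j=1}^{n}\|\zx_j(0)\|+C'\sum_{s=0}^{t}\lambda^{t-s}\sum_{j=1}^{n}\|\ze_j(s)\|.
\end{align*}
For part (a) I would bound the perturbation crudely by $\|\ze_j(s)\|\le a_s(L+r_sM)$ with $L,M$ from Remark~\ref{rem:bound}. Assumption~\ref{assum:parameters} forces $a_sr_s\to0$ (from $\sum a_t^2r_t^3<\infty$ one gets $a_tr_t^{3/2}\to0$, hence $a_tr_t\to0$ since $r_t\ge1$), so the input $a_s(L+r_sM)\to0$; the initial term decays geometrically, and the convolution of a null sequence with a summable geometric kernel is again a null sequence. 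Thus $\|\zb_i(t+1)-\bzx(t)\|\to0$, proving (a).

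For part (b) I would instead multiply the key estimate by $b_t$ and sum over $t$, keeping the perturbations in their exact form rather than using the crude bound. The geometric initial term contributes $C'\big(\sum_j\|\zx_j(0)\|\big)\sum_t b_t\lambda^t<\infty$ since $b_t$ is non-increasing hence bounded. For the double sum I would exchange the order of summation and use $b_t\le b_s$ for $t\ge s$, so that $\sum_{t\ge s}b_t\lambda^{t-s}\le b_s/(1-\lambda)$; this collapses the double sum to $\tfrac{C'}{1-\lambda}\sum_{j=1}^{n}\sum_{s}b_s a_s\|\fb_j(\zb_j(s+1))+r_s\bpsi_j(\zb_j(s+1))\|$, which is finite by the hypothesis of (b) (after passing from the Euclidean norm to $\|\cdot\|_1$ on $\R^d$ and summing over the finitely many agents $j$). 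Hence $\sum_{t}b_t\|\zb_i(t+1)-\bzx(t)\|<\infty$ for every $i$, completing the argument.
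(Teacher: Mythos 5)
Your proposal is correct and takes essentially the same route as the source of this result: the paper itself gives no proof of Theorem~\ref{th:th2} but imports it from \cite{A1}, and your argument reconstructs precisely that proof --- the perturbed push-sum disagreement estimate built from geometric mixing of the backward products of column-stochastic matrices and the uniform lower bound $y_i(t)\ge\delta$ (Lemma~1 and its supporting results in \cite{A1}), followed by the null-convolution argument for part (a) and the summation-exchange argument with $b_t\le b_s$ for part (b) (Corollary~1 in \cite{A1}). Your auxiliary deductions ($a_tr_t\to 0$ from $\sum_t a_t^2r_t^3<\infty$ and $r_t\ge 1$, and the passage from $\|\cdot\|$ to $\|\cdot\|_1$) are also sound, so nothing further is needed.
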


The next theorem is the well-known result on non-negative variables \cite{robbins1985convergence}.

\begin{theorem}\label{th:th_nonnegrv}
 Let $z_n, \beta_n, \xi_n,$ and $\zeta_n$ be non-negative  variables such that
 \begin{align*}
  z_{n+1}\le z_n(1+\beta_n)-\zeta_n+\xi_n.
 \end{align*}
Then $\lim_{n\to\infty} z_n$ exists and is finite and $\sum_{n=1}^{\infty}\zeta_n<\infty$  on $\{\sum_{n=1}^{\infty}\beta_n<\infty, \sum_{n=1}^{\infty}\xi_n<\infty\}$.
\end{theorem}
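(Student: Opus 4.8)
The plan is to reduce the one-step inequality with the multiplicative factor $(1+\beta_n)$ to a pure ``supermartingale-type'' inequality by a suitable normalization, and then to run a telescoping/monotonicity argument. I read the statement in the deterministic sense in which it is used throughout the paper, so that the phrase ``on $\{\sum_{n}\beta_n<\infty,\ \sum_{n}\xi_n<\infty\}$'' simply means that the conclusion holds whenever both series converge; fixing such a realization, all four sequences are non-negative numbers.

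First I would exploit $\sum_{n=1}^\infty \beta_n<\infty$. Since $\log(1+\beta_k)\le \beta_k$, the partial products $\Pi_n:=\prod_{k=1}^{n-1}(1+\beta_k)$ (with the empty product $\Pi_1=1$) form a non-decreasing sequence bounded above by $C:=\prod_{k=1}^\infty(1+\beta_k)<\infty$, so $1\le \Pi_n\le C$ and $\Pi_n\to C$. Dividing the recursion by $\Pi_{n+1}=\Pi_n(1+\beta_n)$ and setting $w_n:=z_n/\Pi_n$, $\tilde\zeta_n:=\zeta_n/\Pi_{n+1}$, $\tilde\xi_n:=\xi_n/\Pi_{n+1}$, the multiplicative term cancels exactly and I obtain the clean inequality $w_{n+1}\le w_n-\tilde\zeta_n+\tilde\xi_n$ with all three auxiliary sequences non-negative. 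Because $\Pi_{n+1}\ge 1$, one has $\sum_n \tilde\xi_n\le \sum_n \xi_n<\infty$.

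Next I would extract summability of $\zeta_n$. Summing the clean inequality from $1$ to $n$ gives $w_{n+1}+\sum_{k=1}^n\tilde\zeta_k\le w_1+\sum_{k=1}^\infty\tilde\xi_k<\infty$; since $w_{n+1}\ge0$, the partial sums $\sum_{k=1}^n\tilde\zeta_k$ are bounded, hence $\sum_n\tilde\zeta_n<\infty$, and then $\sum_n\zeta_n=\sum_n\Pi_{n+1}\tilde\zeta_n\le C\sum_n\tilde\zeta_n<\infty$, which is the second claim.

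For convergence of $z_n$, I would introduce the shifted sequence $v_n:=w_n+\sum_{k=n}^\infty\tilde\xi_k$, whose tail sum is finite by the previous step. Substituting the clean inequality yields $v_{n+1}\le v_n-\tilde\zeta_n\le v_n$, so $v_n$ is non-increasing and non-negative, hence converges to a finite limit; since the tail $\sum_{k=n}^\infty\tilde\xi_k\to0$, the difference $w_n=v_n-\sum_{k=n}^\infty\tilde\xi_k$ converges to the same limit, and finally $z_n=\Pi_n w_n\to C\cdot\lim_n w_n$, which is finite. The only delicate point is the bookkeeping with the products $\Pi_n$: one must keep the normalization self-consistent and use the bound $1\le\Pi_n\le C$ in the correct direction so that summability is preserved in both the $\tilde\xi$ and the $\tilde\zeta$ estimates. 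Once the multiplicative factor has been removed, the remaining telescoping and monotonicity steps are routine.
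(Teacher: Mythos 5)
Your proof is correct, but there is nothing in the paper to compare it against: the paper does not prove this theorem at all, it simply quotes it as ``the well-known result on non-negative variables'' and cites Robbins and Siegmund. Your argument is the standard elementary proof of the deterministic version of that result: you remove the multiplicative drift by normalizing with the partial products $\Pi_n=\prod_{k=1}^{n-1}(1+\beta_k)$, which are bounded between $1$ and $C=\prod_{k=1}^{\infty}(1+\beta_k)\le\exp\bigl(\sum_k\beta_k\bigr)<\infty$, then telescope the drift-free inequality $w_{n+1}\le w_n-\tilde\zeta_n+\tilde\xi_n$ to get $\sum_n\tilde\zeta_n<\infty$, and finally compensate $w_n$ by the tails $\sum_{k\ge n}\tilde\xi_k$ to produce a non-negative non-increasing sequence, hence a finite limit. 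Each step checks out, and you use the bounds on $\Pi_n$ in the correct directions ($\Pi_{n+1}\ge 1$ to dominate $\tilde\xi_n\le\xi_n$, $\Pi_{n+1}\le C$ to convert $\sum_n\tilde\zeta_n<\infty$ back into $\sum_n\zeta_n<\infty$). Two observations on the relation to the cited source. First, your deterministic reading is exactly what the paper needs: in the convergence proof of Theorem~\ref{th:main_ps} the theorem is applied to deterministic sequences such as $z_t=\Psi(\bar{\zx}(t))$ and $z_t = F_t(\bar{\zx}(t))-F_t^*$, with $\beta_t\equiv 0$, so your proof fully covers the paper's usage and makes the paper self-contained where the citation left it dependent on an external reference. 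Second, the actual Robbins--Siegmund theorem is a probabilistic statement about non-negative almost supermartingales, where the recursion bounds a conditional expectation $\E[z_{n+1}\,|\,\mathcal{F}_n]$ and the conclusion holds almost surely on the event $\{\sum_n\beta_n<\infty,\ \sum_n\xi_n<\infty\}$ --- this is the origin of the otherwise odd phrase ``on $\{\cdots\}$'' in the statement. Your normalization idea is also the first step of the proof in that setting, but there it must be combined with the martingale convergence theorem rather than simple monotonicity, so your argument does not replace the probabilistic result; it cleanly establishes the deterministic special case, which is all this paper uses.
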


\end{document}